\newcommand{\p}{\mathbb{P}}
\newcommand{\e}{\mathbb{E}}
\newcommand{\var}{\text{Var}}
\definecolor{colorA}{rgb}{0.0 0.0, 1.0}
\crefname{hypothesis}{Hypothesis}{Hypotheses}
\title{A Bounded-Confidence Model of Opinion Dynamics on Hypergraphs
\thanks{
\funding{AH, MAP, MF, and YK acknowledge support from the National Science Foundation (Grant
No. 1922952) through the Algorithms for Threat Detection (ATD) program.}}}
\author{Abigail Hickok\thanks{Department of Mathematics, University of California, Los Angeles, CA, USA
  (\email{ahickok@math.ucla.edu}, \email{jkureh@gmail.com}).}
\and Yacoub Kureh\footnotemark[2]
\and Heather Z. Brooks\thanks{Department of Mathematics, Harvey Mudd College, Claremont, CA, USA (\email{hzinnbrooks@g.hmc.edu}).}
\and Michelle Feng\thanks{Department of Computing + Mathematical Sciences, California Institute of Technology, Pasadena, CA, USA (\email{mfeng@caltech.edu}).}
\and Mason A. Porter\thanks{Department of Mathematics, University of California, Los Angeles, CA, USA; Santa Fe Institute, Santa Fe, NM, USA (\email{mason@math.ucla.edu}).}}
\newcommand*{\addFileDependency}[1]{
  \typeout{(#1)}
  \@addtofilelist{#1}
  \IfFileExists{#1}{}{\typeout{No file #1.}}
}
\begin{document}

\maketitle

\begin{abstract}
People's opinions evolve over time as they interact with their friends, family, colleagues, and others. In the study of opinion dynamics on networks, one often encodes interactions between people in the form of dyadic relationships, but many social interactions in real life are polyadic (i.e., they involve three or more people). In this paper, we extend an asynchronous bounded-confidence model (BCM) on graphs, in which nodes are connected pairwise by edges, to {an asynchronous BCM on} hypergraphs, {in which {arbitrarily} many nodes can be connected by a single hyperedge}. We show that our hypergraph BCM converges to consensus under a wide range of initial conditions for the opinions of the nodes, including for non-uniform and asymmetric initial opinion distributions. We {also} show that, under suitable conditions, echo chambers can form on hypergraphs with community structure. We {demonstrate} that the opinions of individuals can sometimes jump from one opinion cluster to another in a single time step, a phenomenon (which we call ``opinion jumping'') that is not possible in standard dyadic BCMs. {Additionally, we} {observe} that there is a phase transition in the convergence time on {a complete hypergraph} when the variance $\sigma^2$ of the initial opinion distribution equals the confidence bound $c$. {We prove that the convergence time grows at least exponentially fast with the number of nodes when $\sigma^2 > c$ and the initial opinions are normally distributed.} Therefore, to determine the convergence properties of our hypergraph BCM when the variance and the number of hyperedges are both large, it is necessary to use analytical methods instead of relying only on Monte Carlo simulations.
\end{abstract}

\begin{keywords}
Hypergraphs, networks, continuous-valued opinion dynamics, bounded-confidence models, consensus, polarization
\end{keywords}

\begin{AMS}
91D30, 05C65, 05C82.
\end{AMS}



\section{Introduction}

Social interactions with friends and acquaintances can {persuade} people to change their opinions about public figures \cite{kozitsin}, social issues \cite{siegel}, economic policy \cite{econ_apps}, and more. In opinion dynamics, researchers study how people's opinions about one or more topics evolve over time as they interact and influence each other \cite{galesic}. Traditionally, one models entities as nodes in a graph and {one models the social interactions of the entities} as edges that encode pairwise interactions between them \cite{bullo,newman2018}. The opinions of these entities can change as a result of such interactions. In the present paper, we build on these ideas by studying the effects of group interactions {on} opinion formation by modeling {these} interactions as hyperedges in a hypergraph. We find that such polyadic interactions play a key role in whether {or not a group reaches consensus} and {in} how long it takes to reach it.

We focus on \textit{continuous-opinion dynamics}, in which nodes have continuous-valued opinions. In our model, nodes hold opinions in $\mathbb{R}${. We} denote the \textit{opinion state} of the {system} by $\bm{x} \in \mathbb{R}^N$, where $N$ is the number of nodes. This is an appropriate model for opinions, such as the strength of support for a political candidate \cite{voteview}, that lie on a spectrum. By contrast, opinions such as whether one supports the Los Angeles Dodgers or {the} San Francisco Giants may leave little or no room for any middle ground. 

Bounded-confidence models {(BCMs) are models with continuous opinion states} in which individuals are influenced only by neighbors who hold opinions that are within some \textit{confidence bound} $c$ of their own opinion \cite{vixie, lorenz2007continuous}. Individuals who disagree with each other {too much} do not influence each other \cite{discrepency}. This models the concept of \emph{selective exposure} from social psychology; according to this principle, individuals tend to ignore information that is contrary to their current viewpoint \cite{sage, social_psych}. In traditional BCMs, each individual is a node in a graph and its neighbors are its adjacent nodes. A BCM is \textit{asynchronous} if only one pair of neighbors can interact at a time and is \textit{synchronous} if all pairs of neighboring nodes interact during each time step. The two most commonly studied BCMs are the (asynchronous) Deffuant--Weisbuch (DW) model~\cite{mixingBeliefs,meetDiscuss} and the (synchronous) Hegselmann--Krause (HK) model~\cite{hk}. See Ref.~\cite{lorenz2007continuous} for a review and a comparison of these two models, and see the introduction of Ref.~\cite{meng} for a recent summary of research on BCMs.

An important limitation of graphs is that they force one to consider only pairwise (i.e., ``dyadic'') interactions between nodes (as well as self-interactions, if one allows self-edges), whereas many social interactions involve many individuals at once {\cite{bick2021,petri_review, vision2020}.} One example of such a polyadic (i.e., ``higher-order'') social interaction is group messaging, such as group texting or e-mails with more than one recipient. We seek to examine the effects of polyadic interactions on opinion dynamics, so we develop and analyze an extension of BCMs to hypergraphs. In a hypergraph, a hyperedge can connect an arbitrary number of nodes to each other, rather than just two. In the context of opinion dynamics, one way to interpret such interactions is as a form of ``peer pressure'' \cite{synergy, melnik}, but other interpretations are also possible. Importantly, it is not possible to reduce the higher-order interactions in our hypergraph BCM to an aggregation of pairwise interactions.

Hypergraph extensions of opinion models have attracted much attention in the last few years \cite{hypergraph_bookchapter}. Sahasrabuddhe et al.~\cite{sahas} proposed a synchronous\footnote{We define a \emph{synchronous hypergraph model} to be one in which each node interacts with all {of} its {incident} hyperedges at once.} opinion model on hypergraphs, and {they conducted} numerical simulations of their model on {complete hypergraphs}, random hypergraphs, and real-world hypergraphs. Their model was not a {BCM}. By contrast, our model is a hypergraph extension of an asynchronous\footnote{We define an \emph{asynchronous hypergraph model} {as a model} in which only nodes in {a single hyperedge can interact in one time step}.} {BCM. We both conduct numerical simulations and derive analytical results about our model. Our hypergraph BCM produces qualitatively different dynamics from the model of Sahasrabuddhe et al. For example, consensus occurs under different circumstances in the two models. Additionally, in our model, the mean opinion is constant in time.} To the best of our knowledge, other hypergraph extensions of continuous-opinion models have only considered interactions with three or fewer nodes (i.e., hypergraphs in which the hyperedge ``sizes'' are no larger than $3$)\footnote{A few opinion models that consider large hyperedges have been developed for {frameworks} (such as voter models) with discrete-valued opinions \cite{majority_vote, coloring,petri2019}.} \cite{multibody, noonan}. In our hypergraph BCM, we find that ``large'' hyperedges (i.e., hyperedges that are incident to many nodes) are crucial for reaching consensus and that hypergraphs that have large hyperedges behave rather differently than hypergraphs with only small hyperedges.

A key {issue} in opinion dynamics {is} how model parameters, hypergraph structure (or the structure of other types of networks), and initial opinion states influence the opinion state to which a model converges. By applying the results of \cite{lorenz}, we show that the opinion state always converges to some \emph{limit state}. In the standard dyadic DW and HK models, the number of opinion clusters in the limit state depends on the confidence bound $c$. We say that the opinion state is at \textit{consensus} if there is a single opinion cluster; that is, every node has the same opinion. Most work on dyadic BCMs has drawn initial opinions uniformly at random from $[0, 1]$. In this case, the opinion state converges to consensus only when $c$ is above a certain threshold value \cite{mixingBeliefs, meetDiscuss, weisbuchClusters, universal, lorenz_thesis, hk_thresh}. By contrast, we show in \cref{equilibrium} that there is no such confidence-bound threshold for our hypergraph BCM. In \cref{sec:complete}, we prove this result for {complete hypergraphs}. In fact, we prove the following stronger statement: if the initial opinion distribution is bounded, then the opinion state on a complete hypergraph converges to consensus almost surely {if the hypergraph has} sufficiently many nodes. The case in which the initial opinion distribution is bounded includes {non-uniform probability distributions, asymmetric probability distributions, and probability distributions in which one draws initial opinions uniformly at random from an interval.} When one draws the initial opinions of the nodes in {a} complete hypergraph from a distribution with variance $\sigma^2 < c$, we prove that the probability of consensus approaches $1$ as the number of nodes approaches infinity. {For the special case in which the initial opinions are normally distributed, we also present numerical evidence that the opinion state converges to consensus even when $c \leq \sigma^2$.} We give a heuristic argument {to explain this observation.}

We also explore the behavior of our hypergraph BCM when the initial opinions depend on community structure \cite{comm_review, comm_fortunato}, in which dense sets of nodes are connected sparsely to other dense sets of nodes. A recent study of a dyadic BCM showed heuristically on ordinary graphs with community structure that separate opinion clusters tend to emerge for each community if the communities are not well-connected to each other \cite{fennell}. (See also Ref.~\cite{nonlinear_consensus}.) We use the term \textit{polarization} for this phenomenon\footnote{{Some researchers} refer to this phenomenon as ``opinion fragmentation'' and use the term ``polarized'' only when there are exactly two opinion {clusters.}}, and we use the term \textit{echo chambers} \cite{echo_flaxman, echo_fb, echo_political} for these different opinion clusters. In \cref{sec:comm}, we study our BCM on hypergraphs with community structure. We prove that polarization can occur if there is an upper bound on the sizes of the hyperedges that connect different communities. This provides a possible mechanism for the formation of echo chambers in hypergraphs. However, if there is no upper bound on the size of inter-community hyperedges and each community forms a complete hypergraph (i.e., a \emph{hyperclique}) and has sufficiently many nodes, we prove that our hypergraph BCM converges to consensus.

Using numerical simulations, we {demonstrate} that our {theoretical} results about reaching consensus are robust. The theorems in \cref{sec:complete} require that the hypergraph is complete, and some of the results in \cref{sec:comm} require that the communities form hypercliques. However, in practice, we can relax these conditions and the nodes' opinions still eventually reach consensus on the hypergraph in the former case and on individual communities in the latter case. In \cref{sparse}, we study the behavior of our BCM on sparse Erd\H{o}s--R\'enyi-like hypergraphs by using Monte Carlo simulations. In \cref{sec:enron}, we study the behavior of our model on the Enron e-mail hypergraph \cite{benson}, in which the nodes are Enron employees and hyperedges encode e-mails between these employees. Hypergraphs that one constructs from empirical data are interesting examples both because {typically they are} sparse and because their hyperedges are usually small in comparison to the number of nodes.

The convergence time of our a BCM is a significant factor to consider when we are running numerical simulations of it. In \cref{time}, we partially characterize the conditions under which our hypergraph BCM converges in finite time. In particular, we prove that it almost surely converges in finite time on {a} complete hypergraph. By comparison, the dyadic DW model for ordinary graphs usually does not converge in finite time, although the HK model always does \cite{hk_time}. We also {observe} that there is a phase transition in the convergence time of our BCM on {a complete hypergraph} when the variance $\sigma^2$ of the initial opinion distribution equals the confidence bound $c$. {We prove that the expected convergence time of our BCM grows at least exponentially fast with the number $N$ of nodes when $\sigma^2 > c$ and the initial opinions are normally distributed. When $\sigma^2 < c$, our numerical experiments on complete hypergraphs converge much faster than when $\sigma^2 > c$.} Meng et al. \cite{meng} demonstrated numerically that the standard dyadic DW model also has a phase transition in convergence time. It is important to understand this phase transition because when one is running {a Monte Carlo simulation} of a BCM, one chooses a finite cutoff time to stop the simulations. Without analysis of the convergence time, one may accidentally cut off the numerical simulations too early and mistakenly conclude that there is a phase transition in the limit state when what has actually occurred is a phase transition in convergence time.

When studying opinion dynamics, {it is also desirable} to understand the evolution of the opinion state before {reaching} a limit state. In \cref{sec:jump}, we investigate a phenomenon, which we call \textit{opinion jumping}, in which the opinion of a node changes by more than $c$ in a single time step. Opinion jumping allows nodes with extreme opinions to jump {close} to the mean of the opinion distribution in a single time step. This behavior cannot occur in the classical dyadic DW or HK models because nodes in those BCMs interact only with neighbors whose opinions are sufficiently similar to their own.

Our paper proceeds as follows. In \cref{sec:def}, we give a formal definition of our hypergraph BCM. In \cref{equilibrium}, we present our results about its limit state. These are the main results of our paper. In \cref{time}, we discuss convergence time. In \cref{sec:jump}, we examine opinion jumping and quantify how often it occurs. We conclude and discuss future work in \cref{conclusion}. Our code is available at \url{https://bitbucket.org/ahickok/hypergraph-bcm}.


\section{A Bounded-Confidence Model on Hypergraphs}\label{sec:def}

In this section, we develop an extension of the Deffuant--Weisbuch (DW) model to hypergraphs. We start by presenting the standard DW model on graphs.

In the standard dyadic DW model, opinion dynamics occur on an unweighted and undirected graph whose edges encode social ties. At each discrete time $t$, one chooses an edge $e = \{i, j\}$ uniformly at random. If the difference $|x_i(t)-x_j(t)|$ of opinions between nodes $i$ and $j$ is below some confidence bound $c_{i,j}$, then nodes $i$ and $j$ adjust their opinions as follows:
\begin{align}
    x_i(t+1) &= x_i(t) + m_{i,j} (x_j(t)-x_i(t))\,, \notag \\
    x_j(t+1) &= x_j(t) + {m_{j, i}} (x_i(t)-x_j(t))\,,
\end{align}
where $m_{i,j}$ is an element of the matrix of convergence parameters. Otherwise, the opinions of nodes $i$ and $j$ are too far apart at time $t$, so  $x_i(t+1)=x_i(t)$ and $x_j(t+1)=x_j(t)$. The opinions of all other nodes do not change when we do this update. With this type of update rule, the mean opinion of the nodes in a network is a conserved quantity. The confidence bounds $c_{i,j}\in[0,\infty)$ model {the level of open-mindedness of individuals} to the opinions of others \cite{petty1986elaboration}. The convergence parameters $m_{i,j}\in[0,0.5]$ (which resemble the trust parameters in DeGroot models \cite{vixie}) control the rate at which individuals adjust their opinions \cite{meetDiscuss, mixingBeliefs}. Using a single value $c$ and $m$ for all pairs leads to what is sometimes called the ``homogeneous'' DW model.

We now define our BCM on hypergraphs as an extension of the homogeneous DW model. A hypergraph is a generalization of a graph that allows {interactions} between arbitrarily many nodes. {That is, interactions between multiple nodes can be either dyadic or polyadic.} The space of possible hyperedges is the power set $\mathcal{P}(V)$ of the set $V$ of nodes. Let $N:= |V|$ denote the number of nodes. In the hypergraphs $(V,E)$ that we consider, we restrict the hyperedge set $E\subset\{e \in \mathcal{P}(V) |\; |e|\geq 2\}$ so that each hyperedge is incident to at least two nodes. Prohibiting hyperedges that are attached to only a single node (these are called ``self-hyperedges'') affects only {the} convergence time; it does not affect the limit state. All of our hypergraphs are unweighted and undirected. In our BCM, there is a time-dependent opinion state $\bm{x}(t)\in O^{N}$, where we take the opinion space $O$ to be the real line $\mathbb{R}$. We use $x_i(t)$ to denote the opinion of node $i$ at time $t$.

To generalize the notion of a confidence bound to hyperedges, we define a \emph{discordance function} $d : E\times O^{N}\rightarrow \mathbb{R}_{\geq0}$ that maps a hyperedge and opinion state to a real number. We use this function, which quantifies the level of disagreement among the nodes that are incident to a hyperedge, to determine whether or not these nodes update their opinions. {We consider the following family of discordance functions:}
\begin{equation}\label{eq:BCM_discordance}
    d_\alpha(e,\bm{x}) = \left( \frac{1}{|e|-1} \right)^\alpha \sum_{i \in e} \left(x_i -  \bar{x}_e\right)^2 \,, 
\end{equation}
{which is parametrized by the scalar $\alpha$, where $\bar{x}_e = (\sum_{i\in e} x_i)/|e|$.} If the discordance $d_{\alpha}(e, \bm{x}(t))$ is less than the confidence bound $c$, {we} say that the hyperedge $e$ is \emph{concordant} at time $t$. Otherwise, we say that $e$ is \emph{discordant}. 

{The choice $\alpha = 1$ is a noteworthy special case.} The function $d_1(e,\bm{x})$ is equal to the unbiased sample variance of the opinions of the nodes that are incident to $e$. {This models a situation in which nodes with moderate opinions can mediate between nodes in a group with extreme opinions as long as the overall disagreement within the group is not too high.} For example, let $\bm{x} = (0,1,0.5)$ and consider the hyperedges $e=\{1,2\}$ and $e'=\{1,2,3\}$. We see that $d_1(e,\bm{x}) = 0.5 > 0.25 = d_1(e',\bm{x})$, even though $e\subset e'$. One can interpret node $3$'s role in the interaction as that of a mediator who reduces the amount of discordance, thereby potentially yielding an update that otherwise {would} not occur. The scaling by $1/(|e|-1)$ in \eqref{eq:BCM_discordance} prevents {advantaging hyperedges with few nodes over hyperedges with many nodes when we update opinions.} Specifically, if the opinions are independent and identically distributed, then the expected $d_1$-discordance of any subset of nodes is
\begin{equation}
    \mathbb{E}[d_1(e,\bm{x})]=\mathbb{E}[d_1(e',\bm{x})] \; \; \text{for all}\,\, e,e'\in E \, .
\end{equation}
We set the discordance function to $d = d_1$ for the remainder of this paper.

Another {noteworthy} special case, which we do not {consider further} in the present paper, {is} $\alpha = 0$. The function $d_0(e,\bm{x})$ penalizes large hyperedges, in the sense that 
\begin{equation}\label{eq:hyperedge_monotonic}
    d(e,\bm{x}) \leq d(e',\bm{x}) \, \text{ if } \, e\subset e' \,,
\end{equation} 
with equality holding if and only if $x_i=\bar{x}_e$ for all $i\in e'\setminus e$. We use the term \emph{hyperedge monotonic} for a discordance function that satisfies \cref{eq:hyperedge_monotonic}. This models a situation in which large groups tend to be less effective than small groups at changing opinions.

We employ {asynchronous updates (as in the DW model).} At each discrete time, we randomly select a hyperedge $e$ from $E$ according to some probability distribution. For mathematical convenience, we use the uniform distribution over $E$. If the discordance $d(e,\bm{x})$ is less than the confidence bound $c$, the nodes $i\in e$ update their opinions $x_i$ to the mean opinion $\bar{x}_e$; otherwise, their opinions do not change. {One way to think of this update is that nodes $i \in e$ are {``peer-pressured''} into conforming to the mean opinion of the group when the overall discordance of the group is {sufficiently small}. More precisely,} if we select hyperedge $e$ at time $t$, {the} update rule for each node $i$ is
\begin{equation}\label{eq:BCM_hypergraph_update}
   	 x_i(t+1) = \begin{cases}
				\bar{x}_e(t) \, , &  \text{if }\, {i \in e \,\text{ and }}\, d(e,\bm{x}) <c\\
				x_i(t) \, , & \text{otherwise}\,.
\end{cases}
\end{equation}
The sequence $\bm{x}(0), \bm{x}(1), \bm{x}(2), \ldots$ of opinion states is a discrete-time Markov chain with a continuous state space.

{For the special case of a hypergraph that is a} graph (i.e., if $|e|=2$ for all $e\in E$), our generalized BCM reduces to a standard DW model with a rescaled confidence bound $c$. This rescaling arises from the difference in discordance functions: the standard DW model uses the absolute value $|x_i-x_j|$ of the difference of opinions, whereas our model uses $\frac{1}{2} (x_i-x_j)^2$ {[see \cref{eq:BCM_discordance}]}. On hypergraphs that are graphs, {our generalized BCM with confidence bound $\frac{1}{2}c^2$ is equivalent to the standard DW model with confidence bound $c$. Therefore,} our generalized BCM does not exactly reduce to the standard DW {model. However, the} two models are still easy to compare. The advantage of our choice is that {the discordance $d_1$} is equal to the unbiased sample {variance; this is helpful for deriving our analytical} results.


\section{The Limit State of Our Hypergraph BCM}\label{equilibrium}

We say that the opinion state \textit{converges} to $\bm{x}^*$ if $\lim_{t \to \infty} \bm{x}(t) = \bm{x}^*$. We refer to $\bm{x}^*$ as the \textit{limit state}. An \textit{opinion cluster} in the limit state is a collection of nodes that all have the same opinion in the limit state. The \textit{opinion value} of an opinion cluster is the opinion $\gamma \in \mathbb{R}$ such that $x^*_i = \gamma$ for all nodes $i$ in that cluster.

The opinion state converges to \textit{consensus} if there is a $\gamma \in \mathbb{R}$ such that $x^*_i = \gamma$ for all $i$. Equivalently, the opinion state converges to consensus if there is exactly one opinion cluster in the limit state. If the opinion state converges to consensus, it is necessarily true that $\gamma = \frac{1}{N}\sum_{i=1}^N x_i(0)$ because the mean opinion of the nodes is constant with respect to time.  

{An opinion state $\bm{x}$ is} an \textit{absorbing} state if for all $ e \in E$, either $d(e, \bm{x}) \geq c$ or $d(e, \bm{x}) = 0$ (i.e., $x_i = x_j$ for all $i, j \in e$). If $\bm{x}(T)$ is an absorbing state, then $\bm{x}(t) = \bm{x}(T)$ for all $t \geq T$. We will prove in \cref{no_updates} that the limit state is almost surely an absorbing state.

{We now show that for any} initial opinion state $\bm{x}(0)$, the opinion state of our hypergraph BCM converges in the limit $t \to \infty$.

\begin{theorem}\label{converge}

Let $\bm{x}(0)$ be the initial opinion {state and} update the opinion state $\bm{x}(t)$ {according to \cref{eq:BCM_hypergraph_update}}. It follows that the limit state $\bm{x}^* := \lim_{t \to \infty} \bm{x}(t)$ exists.
\end{theorem}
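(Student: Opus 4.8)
The plan is to rewrite each update as left‑multiplication by a row‑stochastic matrix and then invoke the stabilization theorem of Lorenz~\cite{lorenz}. Fix a realization of the random sequence of selected hyperedges $e(0), e(1), e(2), \ldots$. For each time $t$, define an $N \times N$ matrix $A(t)$ as follows: if the hyperedge $e(t)$ is concordant at time $t$, i.e. $d(e(t), \bm{x}(t)) < c$, let $A(t)$ be the matrix whose $(i,j)$ entry equals $1/|e(t)|$ when $i, j \in e(t)$, equals $1$ when $i = j \notin e(t)$, and equals $0$ otherwise; if $e(t)$ is discordant at time $t$, let $A(t) = I$. By the update rule \cref{eq:BCM_hypergraph_update}, we have $\bm{x}(t+1) = A(t)\bm{x}(t)$, and hence $\bm{x}(t) = A(t-1)A(t-2)\cdots A(0)\,\bm{x}(0)$.

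Next I would verify that each $A(t)$ satisfies the hypotheses of Lorenz's theorem. First, $A(t)$ is row‑stochastic (in fact doubly stochastic and symmetric, which is consistent with conservation of the mean opinion). Second, its diagonal entries are all positive, since each is either $1$ or $1/|e(t)|$. Third, because $A(t)$ is symmetric, $A(t)_{ij} > 0$ if and only if $A(t)_{ji} > 0$. Fourth, there is a uniform $\delta > 0$ such that every positive entry of every $A(t)$ is at least $\delta$: each $A(t)$ is one of the at most $|E| + 1$ fixed matrices described above, and the smallest positive entry among all of them is $\min_{e \in E} 1/|e| \geq 1/N$, so one may take $\delta = 1/N$. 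Applying the stabilization theorem of~\cite{lorenz} then yields that $\lim_{t\to\infty} A(t)A(t-1)\cdots A(0)$ exists, and therefore $\bm{x}^* := \lim_{t\to\infty}\bm{x}(t)$ exists. Since this argument applies to every realization of the hyperedge sequence, the conclusion holds (almost) surely.

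The only subtlety to address carefully is that the matrices $A(t)$ are chosen \emph{adaptively}: the choice depends on the current state $\bm{x}(t)$ through the concordance condition $d(e(t),\bm{x}(t)) < c$, whereas Lorenz's theorem is typically stated for a prescribed sequence of matrices. This is not an actual obstacle, because the hypotheses of that theorem are conditions on the individual matrices $A(t)$ and hold regardless of how the sequence was generated; one simply applies the theorem path by path. A second, minor point is that Lorenz's theorem requires the lower bound $\delta$ on nonzero entries to be uniform in $t$, which is exactly why the finiteness of the hyperedge set $E$ is used above.
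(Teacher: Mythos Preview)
Your proof is correct and follows essentially the same route as the paper: both write the update as $\bm{x}(t+1)=A(t)\bm{x}(t)$ with $A(t)$ either the identity or the averaging matrix of the selected hyperedge, verify Lorenz's three conditions (positive diagonal, symmetric positivity pattern, uniform lower bound $1/N$ on positive entries), and invoke the stabilization theorem of~\cite{lorenz}. Your explicit remarks on row-stochasticity and on the adaptivity of the matrix sequence are sound clarifications but do not change the argument.
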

\begin{proof}Let $A(\bm{x}(t), t)$ be the $N \times N$ matrix such that $\bm{x}(t+1) = A(\bm{x}(t), t)\bm{x}(t)$, and let $e_t$ denote the hyperedge that we choose at discrete time $t$. If $e_t$ is discordant, then $A(\bm{x}(t), t) = I_N$. If $e_t$ is concordant, then $A(\bm{x}(t), t)$ is the matrix with entries
\begin{equation*}
	A(\bm{x}(t), t)_{ij} = \begin{cases} 1/|e_t| \,, & i, j \in e_t \\
\delta_{ij} \,, & \text{otherwise} \,. \end{cases}
\end{equation*}
The matrix $A(\bm{x}(t), t)$ satisfies the following conditions \cite{lorenz}:
\begin{enumerate}
    \item[(1)]\emph{Every agent has a bit of self-confidence}: The diagonal entries of $A(\bm{x}(t), t)$ are positive.
    \item[(2)]\emph{Confidence is mutual}: That is, for all {pairs $i, j \in \{1, \ldots, N\}$,} we have that $A(\bm{x}(t), t)_{ij} > 0$ if and only if $A(\bm{x}(t), t)_{ji} > 0$.
    \item[(3)]\emph{Positive weights do not converge to {$0$}}: There is a $\delta > 0$ such that every positive entry of $A(\bm{x}(t), t)$ is at least $\delta$. In our model, every positive entry is at least $1/N$.
\end{enumerate}

For any two times $t_0$ and $t_1$ with $t_0 < t_1$, Lorenz \cite{lorenz} defined the \textit{accumulation matrix} 
\begin{equation*}
	A(t_0, t_1) := A(\bm{x}(t_1 -1), t_1 -1)A(\bm{x}(t_1 -2), t_1 - 2)\times \cdots \times A(\bm{x}(t_0 +1),t_0 + 1)A(\bm{x}(t_0), t_0)\,.
\end{equation*}	 
Using this notation, $\bm{x}(t) = A(0, t)\bm{x}(0)$. He showed that if conditions (1)--(3) are satisfied, then there is a time $t_0$ and an ordering of the nodes such that 
\begin{equation}\label{Alim}
	\lim_{t \to \infty} A(0,t) = \begin{bmatrix} K_1 & & 0 \\
		&\ddots \\ 0 & & K_p \end{bmatrix} A(0, t_0) \,,
\end{equation}
where each $K_i$ is a row-stochastic matrix {whose rows are all the same}. The DeGroot model \cite{degroot}, the dyadic DW model \cite{mixingBeliefs}, and the HK model \cite{hk} all satisfy conditions (1)--(3).

Let $I_i$ be the set of nodes in the block $K_i$. \Cref{Alim} implies that $\bm{x}(t)$ converges to some opinion state $\bm{x}^*$ such that $x^*_j = x^*_k$ for all $j, k \in I_i$.
\end{proof}

{We will use the following lemma repeatedly in the subsections that follow.}

\begin{lemma}\label{no_updates}
{Let $\bm{x}(0)$ be the initial opinion {state,} and let $\bm{x}(t)$ be the opinion state {that is} determined by \cref{eq:BCM_hypergraph_update}. It follows that} the limit state $\bm{x}^* := \lim_{t \to \infty} \bm{x}(t)$ is almost surely an absorbing state.
\end{lemma}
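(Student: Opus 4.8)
The plan is to argue by contradiction. Suppose that with positive probability the limit state $\bm{x}^*$ (which exists by \cref{converge}) fails to be absorbing. By the definition of an absorbing state, this means there is a hyperedge $e \in E$ with $0 < d(e,\bm{x}^*) < c$. Because $E$ is finite, it suffices to show that for each fixed $e \in E$ the event $F_e := \{0 < d(e,\bm{x}^*) < c\}$ has probability $0$.

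The first ingredient is continuity of the discordance function. The map $d = d_1$ is a polynomial in the opinion coordinates [see \cref{eq:BCM_discordance}], so it is continuous, and \cref{converge} gives $\bm{x}(t) \to \bm{x}^*$. Consequently, on $F_e$ we have $d(e,\bm{x}(t)) \to d(e,\bm{x}^*) \in (0,c)$, so there is a (random but almost surely finite) time $T$ such that $\frac{1}{2} d(e,\bm{x}^*) < d(e,\bm{x}(t)) < c$ for all $t \geq T$. In particular, $e$ is concordant at every time $t \geq T$.

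The second ingredient is a Borel--Cantelli argument applied to the hyperedge selection. Since a hyperedge is drawn uniformly and independently from $E$ at each discrete time, the events $\{e \text{ is chosen at time } t\}$ are independent, each with probability $1/|E| > 0$; hence, almost surely, $e$ is chosen at infinitely many times. Discarding the finitely many selection times before $T$ leaves infinitely many, so almost surely on $F_e$ there is a time $t \geq T$ at which $e$ is chosen. Since $e$ is concordant at that time, the update rule \cref{eq:BCM_hypergraph_update} sets $x_i(t+1) = \bar{x}_e(t)$ for every $i \in e$, whence $d(e,\bm{x}(t+1)) = 0$. But $t + 1 > T$, so $d(e,\bm{x}(t+1)) > \frac{1}{2} d(e,\bm{x}^*) > 0$, a contradiction. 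Therefore $\p(F_e) = 0$, and a union bound over the finitely many $e \in E$ shows that $\bm{x}^*$ is almost surely absorbing.

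I expect the main thing to handle carefully is the bookkeeping between the random time $T$, which depends on the whole trajectory, and the ``$e$ is chosen infinitely often'' event: one wants to be sure these can be combined on the same event of full probability, which they can since removing finitely many selection times from an infinite collection leaves an infinite collection. The remaining points --- continuity of $d$, independence of the selections across time, and the fact that a realized concordant update zeroes the within-hyperedge sample variance --- are routine.
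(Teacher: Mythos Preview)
Your proposal is correct and follows essentially the same approach as the paper: assume the limit is not absorbing, pick a hyperedge $e$ with $0<d(e,\bm{x}^*)<c$, use continuity to argue $e$ stays concordant from some time on, invoke Borel--Cantelli to guarantee $e$ is eventually selected, and derive a contradiction with convergence. The only cosmetic difference is the quantity you track for the contradiction: the paper bounds an individual increment $|x_i(t+1)-x_i(t)|$ from above by $\epsilon$ and from below by $|\bar{x}_e^*-x_i^*|-2\epsilon$, whereas you observe directly that the update forces $d(e,\bm{x}(t+1))=0$ in violation of the lower bound $\tfrac{1}{2}d(e,\bm{x}^*)$; both arguments are equivalent in substance.
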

\begin{proof}
By \cref{converge}, we know that 
$\bm{x}^*$ exists. If $\bm{x}^*$ is not an absorbing state, then there is a hyperedge $e \in E$ such that $d(e, \bm{x}^*) < c$ and $x_i^* \neq x_j^*$ for some $i, j \in e$. Let $\bar{x}_e^* = \frac{1}{|e|} \sum_{k \in e} x_k^*$, and note that $x_i^* \neq \bar{x}_e^*$. For all $\epsilon > 0$, there is a time $T$ such that 
\begin{align}
	|x_i(t+1) - x_i(t) | &< \epsilon \,, \label{eq:stable1}\\
	|x_i(t) - x_i^*| &< \epsilon \,, \label{eq:stable2}\\
	|d(e, \bm{x}(t)) - d(e, \bm{x}^*)| &< \epsilon \,, \label{eq:stable3}\\
	|\bar{x}_e(t) - \bar{x}_e^*| &< \epsilon  \label{eq:stable4}
\end{align}
for all $ t \geq T$. Choose $\epsilon < \min\{c - d(e, \bm{x}^*), |\bar{x}_e^* - x_i^*|/3\}$, and let $T$ be a time that satisfies \cref{eq:stable1}--\cref{eq:stable4}. With probability $1$, we choose every hyperedge in $E$ infinitely often (by the Borel--Cantelli lemma). Therefore, we choose $e$ at some time $t \geq T$ almost surely. If this happens, then $d(e, \bm{x}(t)) < c$ by \cref{eq:stable3} and we update the nodes of $e$ to obtain
\begin{equation*}
    |x_i(t+1) - x_i(t)| = |\bar{x}_e(t) - x_i(t)| > |\bar{x}_e^* - x_i^*| - 2 \epsilon > \epsilon
\end{equation*}
by \cref{eq:stable2} and \cref{eq:stable4}, contradicting \cref{eq:stable1}.
\end{proof}

\cref{no_updates} implies that there are almost surely no hyperedges that
are possible to update in the limit state.


\subsection{Our Hypergraph BCM on {Complete Hypergraphs}}\label{sec:complete}

In this subsection, we study the limit state of our hypergraph BCM on {complete hypergraphs}. On {a} complete hypergraph, every possible subset of nodes can interact with one another. Some of our results apply more generally to any hypergraph that includes the hyperedge $e = V$. We begin by presenting several lemmas that we then use to prove \cref{complete_variance}.

\begin{lemma}\label{edge_conc}
If the opinion distribution at time $t$ has finite variance $\sigma^2$, then
\begin{equation}\label{eq:limprobconc}
	\lim_{n \to \infty} \p[d(e, \bm{x}({t})) < c \mid |e| = n]=  \begin{cases}
1 \,, & c > \sigma^2 \\
\frac{1}{2} \,, & c = \sigma^2 \\
0 \,, & c < \sigma^2.
	\end{cases}
\end{equation}
\end{lemma}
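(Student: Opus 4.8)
The plan is to recognize the discordance $d(e,\bm{x}(t))$ for a hyperedge $e$ with $|e|=n$ as (up to the factor $\frac{1}{n-1}$) the sum of squared deviations of $n$ opinions from their empirical mean, i.e.\ as $(n-1)$ times the unbiased sample variance $S_n^2$ of the opinions $x_{i_1},\dots,x_{i_n}$ that are incident to $e$. Hence the event $\{d(e,\bm{x}(t))<c\}$ is exactly the event $\{S_n^2<c\}$. So I must show that $S_n^2 \to \sigma^2$ in a strong enough sense that $\p[S_n^2<c]$ converges to the indicated limits in each of the three regimes $c>\sigma^2$, $c=\sigma^2$, $c<\sigma^2$.

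First I would handle the two ``generic'' cases $c\neq\sigma^2$. Here it suffices to note that $S_n^2$ is a consistent estimator of the variance: by the (weak or strong) law of large numbers applied to $\frac1n\sum x_{i_k}^2$ and to $\bar x$ (both of which require only that the opinions have finite variance, which is the hypothesis), we get $S_n^2 = \frac{n}{n-1}\bigl(\frac1n\sum x_{i_k}^2 - \bar x^2\bigr) \xrightarrow{\p} \sigma^2$ as $n\to\infty$. Convergence in probability of $S_n^2$ to the constant $\sigma^2$ immediately gives $\p[S_n^2<c]\to 1$ when $c>\sigma^2$ (since then $\{S_n^2<c\}\supseteq\{|S_n^2-\sigma^2|<c-\sigma^2\}$) and $\p[S_n^2<c]\to 0$ when $c<\sigma^2$ (symmetric argument on $\{S_n^2\ge c\}$). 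One subtlety to state carefully: I am conditioning on $|e|=n$, but on a complete hypergraph (or any hypergraph containing $V$) the opinions on $e$ are just some $n$ of the node opinions; I will assume, as the surrounding text does, that these are i.i.d.\ draws from the time-$t$ opinion distribution so that the LLN applies. (If independence is only approximate, one would instead invoke a variance bound: $\var(S_n^2)=O(1/n)$ under a finite fourth moment, plus Chebyshev.)

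The main obstacle is the boundary case $c=\sigma^2$, where I claim the limit is exactly $\tfrac12$. Here consistency is not enough; I need a central-limit-type statement. The idea is that $S_n^2-\sigma^2$, suitably rescaled, is asymptotically normal: $\sqrt{n}\,(S_n^2-\sigma^2)\xrightarrow{d}\mathcal N(0,\mu_4-\sigma^4)$, where $\mu_4$ is the fourth central moment, provided $\mu_4<\infty$. Then
\[
\p[d(e,\bm x(t))<c \mid |e|=n] \;=\; \p[S_n^2<\sigma^2] \;=\; \p\bigl[\sqrt n\,(S_n^2-\sigma^2)<0\bigr]\;\longrightarrow\;\Phi(0)=\tfrac12,
\]
by the symmetry of the limiting Gaussian about $0$. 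The delicate point is that the stated hypothesis is only \emph{finite variance}, not finite fourth moment, so strictly I would either (i) strengthen the hypothesis to a finite fourth moment for this case, or (ii) give a more elementary symmetrization argument: since $S_n^2$ is a continuous function of the sample that is, asymptotically, symmetrically distributed about $\sigma^2$ (the contributions $x_{i_k}^2-\sigma^2$ being mean-zero i.i.d., so that $\frac1n\sum(x_{i_k}^2-\sigma^2)$ has a symmetric-in-the-limit distribution once we are willing to invoke a CLT), the probability of landing strictly below the mean tends to $\tfrac12$. I expect the cleanest write-up to simply apply the CLT for sample variances under a finite-fourth-moment assumption (or to observe that the opinion distributions arising in the paper—e.g.\ normal or bounded—automatically have all moments), and then read off $\Phi(0)=\tfrac12$. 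A final minor point to verify is that the limiting normal is nondegenerate ($\mu_4>\sigma^4$, which holds unless the opinions are a.s.\ constant, a trivial case in which $\sigma^2=0$ and the lemma is vacuous for $c=\sigma^2$), so that the mass exactly at $S_n^2=\sigma^2$ vanishes in the limit and the split is genuinely $\tfrac12$.
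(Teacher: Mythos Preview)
Your proposal is correct and follows essentially the same structure as the paper: identify $d(e,\bm{x}(t))$ with the unbiased sample variance $S_n^2$, then treat the cases $c\neq\sigma^2$ and $c=\sigma^2$ separately.

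There is one worthwhile difference. For $c\neq\sigma^2$, the paper uses $\e[S_n^2]=\sigma^2$ together with $\var[S_n^2]\to 0$ and Chebyshev's inequality. That route silently requires a finite fourth moment (otherwise $\var[S_n^2]$ need not exist). Your law-of-large-numbers argument---apply the WLLN to $\frac{1}{n}\sum x_{i_k}^2$ and to $\bar x$, then combine via the continuous mapping---needs only $\e[x^2]<\infty$, which is exactly the stated hypothesis of finite variance. So your approach is slightly more economical in its assumptions for the off-critical cases. For the boundary case $c=\sigma^2$, both you and the paper invoke asymptotic normality of $S_n^2$ (the paper writes the limiting variance as $\sigma^2(\kappa-1)/n$ with $\kappa$ the kurtosis); you are right to flag that this step implicitly needs a finite fourth moment, a caveat the paper does not make explicit. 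Your closing remark that the distributions actually used in the paper (bounded or Gaussian) have all moments is the appropriate way to discharge that caveat.
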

\begin{proof}
The discordance of a hyperedge $e$ at time ${t}$ is the sample variance of the opinions $\{x_j({t}) \mid j \in e\}$. Let $s_n^2$ denote the sample variance of $n$ opinions. By {the} definition of the discordance function $d$, it follows that $\p[d(e, \bm{x}({t})) < c \mid |e| = n] = \p[s_n^2 < c]$. Because $\e[s_n^2] = \sigma^2$ and $\lim_{n \to \infty} \var[s_n^2] = 0$, Chebyshev's inequality implies that
\begin{align*}
	\lim_{n \to \infty} \p[s_n^2 < c]
&\geq \lim_{n \to \infty} 1 - \frac{\var[s_n^2]}{c - \sigma^2} = 1 \,, &\hspace{-1.5 cm}{c > \sigma^2}\,, \\
    \lim_{n \to \infty} \p[s_n^2 < c] 
    &\leq \lim_{n \to \infty} \frac{\var[s_n^2]}{\sigma^2 - c} = 0\,, &\hspace{-1.5 cm}{c < \sigma^2} \,.
\end{align*}
Note that $s_n^2$ converges asymptotically to the normal distribution $\mathcal{N}(\sigma^2, \sigma^2 (\kappa -1)/n)$, where $\kappa$ is the kurtosis of the initial opinion distribution. Because a normal distribution is symmetric, $\lim_{n \to \infty} \p[s_n^2 < c ] = \lim_{n \to \infty} \p[s_n^2 < \e[s_n^2]] = \frac{1}{2}$ if $c = \sigma^2$.
\end{proof}

The following lemma says that if a hyperedge $e$ has a nontrivial update at time $t$, then the discordance of each hyperedge $e'\supset e$ decreases [i.e., $d(e',\bm{x}(t+1)) < d(e',\bm{x}(t))$]. As a direct consequence of \cref{lemma:variance_decrease}, the discordance of the hyperedge $e = V$ is nonincreasing as the system evolves.

\begin{lemma}\label{lemma:variance_decrease}
Let $A = \{x_1, x_2, \ldots, x_n\}$ be a collection of $n$ real numbers, and let $A'=\{x_{i_1}, x_{i_2}, \ldots, x_{i_\ell}\}$ be some subcollection of $A$. Construct a new collection $B$ by taking the union of $ A\setminus A'$ and $\ell$ copies of the mean $\bar{x}_{A'}= \frac{1}{\ell} (\sum_{j=1}^\ell x_{i_j})$ of $A'$. The sample variances satisfy $s^2(B)\leq s^2(A)$, where equality holds if and only if $A=B$. 
\end{lemma}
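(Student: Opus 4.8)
The plan is to reduce everything to a clean algebraic identity: replacing a subcollection $A'$ by $\ell$ copies of its own mean $\bar x_{A'}$ changes neither the total sum nor the count of the collection, so $\bar x_B = \bar x_A =: \mu$, and hence both sample variances are computed against the *same* center $\mu$. Writing $s^2(A) = \frac{1}{n-1}\sum_{x\in A}(x-\mu)^2$ and similarly for $B$, the difference is
\begin{equation*}
  (n-1)\bigl(s^2(A) - s^2(B)\bigr) = \sum_{j=1}^{\ell}(x_{i_j}-\mu)^2 - \sum_{j=1}^{\ell}(\bar x_{A'}-\mu)^2 = \sum_{j=1}^{\ell}(x_{i_j}-\mu)^2 - \ell(\bar x_{A'}-\mu)^2\,,
\end{equation*}
since the terms coming from $A\setminus A'$ cancel. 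So everything comes down to showing $\sum_{j=1}^{\ell}(x_{i_j}-\mu)^2 \ge \ell(\bar x_{A'}-\mu)^2$, with equality iff all $x_{i_j}$ equal $\bar x_{A'}$.

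For that final inequality I would invoke the standard fact that among all real numbers $\beta$, the function $\beta \mapsto \sum_{j=1}^{\ell}(x_{i_j}-\beta)^2$ is minimized at $\beta = \bar x_{A'}$ (complete the square, or cite that the mean minimizes sum of squared deviations). Evaluating at the particular choice $\beta = \mu$ gives $\sum_j (x_{i_j}-\mu)^2 \ge \sum_j (x_{i_j}-\bar x_{A'})^2 = (\ell-1)\,s^2(A')$; then one more expansion, $\sum_j(x_{i_j}-\mu)^2 = \sum_j(x_{i_j}-\bar x_{A'})^2 + \ell(\bar x_{A'}-\mu)^2$ (the Pythagorean/parallel-axis identity, with the cross term vanishing because $\sum_j(x_{i_j}-\bar x_{A'})=0$), yields exactly $\sum_j(x_{i_j}-\mu)^2 - \ell(\bar x_{A'}-\mu)^2 = \sum_j(x_{i_j}-\bar x_{A'})^2 \ge 0$. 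Hence $s^2(B)\le s^2(A)$.

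For the equality case: the chain above shows $s^2(A)=s^2(B)$ forces $\sum_{j=1}^{\ell}(x_{i_j}-\bar x_{A'})^2 = 0$, i.e. $x_{i_j}=\bar x_{A'}$ for every $j$. But in that case the $\ell$ replacement copies are literally the numbers $x_{i_1},\dots,x_{i_\ell}$ we removed, so $B=A$ as multisets; conversely $A=B$ trivially gives equality. This handles both directions.

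I do not expect a genuine obstacle here — the result is elementary once one notices the centers coincide. The only things to be careful about are bookkeeping ($n\ge 2$ so the $\frac{1}{n-1}$ normalization makes sense, and the degenerate case $\ell=1$ where $B=A$ automatically), treating $A,A',B$ as multisets rather than sets (elements may repeat, and ``$\ell$ copies'' is intended literally), and stating the "mean minimizes squared deviations" fact cleanly rather than re-deriving it inline. The mildly delicate point is the equality characterization, but it falls out for free from the identity $s^2(A)-s^2(B) = \frac{1}{n-1}\sum_{j=1}^\ell (x_{i_j}-\bar x_{A'})^2$ derived above, so even that is routine.
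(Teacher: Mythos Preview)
Your proposal is correct and follows essentially the same route as the paper: both observe that $\bar x_B=\bar x_A=:\mu$, reduce $(n-1)(s^2(A)-s^2(B))$ to $\sum_j(x_{i_j}-\mu)^2-\ell(\bar x_{A'}-\mu)^2$, and then show this quantity is nonnegative with equality iff all $x_{i_j}$ coincide. The only cosmetic difference is in that last step: you invoke the parallel-axis identity to rewrite the difference as $\sum_j(x_{i_j}-\bar x_{A'})^2$, whereas the paper substitutes $y_j=x_{i_j}-\mu$ and expands directly to $\tfrac{1}{\ell}\sum_{j<k}(y_j-y_k)^2$; these two expressions are equal by a standard identity, so the arguments are interchangeable.
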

\begin{proof}
The collections $A$ and $B$ have the same mean $\bar{x}_A$. {The following equality holds:}
\begin{equation}\label{RHS}
    (n-1)(s^2(A)-s^2(B)) = \sum_{j=1}^\ell (x_{i_j} - \bar{x})^2 - \ell ( \bar{x}_{A'} - \bar{x})^2\,.
\end{equation}
{We expand} the second term of the right-hand side of \eqref{RHS} {and write}
\begin{equation*}
    \sum_{j=1}^\ell (x_{i_j} - \bar{x})^2 - \frac{1}{\ell} \left( \sum_{j=1}^\ell (x_{i_j} - \bar{x})\right)^2\,.
\end{equation*}
We {then} define $y_j := x_{i_j} - \bar{x}$ to simplify the notation and write
\begin{equation*}
	\sum_{j=1}^\ell y_j^2 - \frac{1}{\ell} \left( \sum_{j=1}^\ell y_j\right)^2\,.
\end{equation*} 
Expanding the second term of the right-hand side of \eqref{RHS} further and simplifying yields
\begin{equation*}
	\begin{split}
\frac{1}{\ell} \left( (\ell - 1) \sum_{j=1}^\ell y_j^2 - 2\sum_{j=1}^\ell\sum_{k= j+1}^\ell y_j y_k \right)
=\frac{1}{\ell} \left(  \sum_{j=1}^\ell\sum_{k= j+1}^\ell (y_j- y_k)^2 \right) \geq 0 \, .
	\end{split}
\end{equation*}
Equality occurs if and only if $y_1= {\cdots} = y_\ell$, which proves the lemma.
\end{proof}

\begin{theorem}\label{complete_variance}
Suppose that $H$ is an $N$-node hypergraph that includes the hyperedge $e_N = V$. {Let $\bm{x}(0)$ be the initial opinion state, with opinions drawn independently from a distribution with a variance of $\sigma^2 < c$, and let $\bm{x}(t)$ be the opinion state {that is} determined by \cref{eq:BCM_hypergraph_update}.} It then follows that the probability of reaching consensus approaches $1$ as $N \to \infty$.
\end{theorem}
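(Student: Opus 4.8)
The plan is to exploit the fact that the hyperedge $e_N = V$ belongs to the hypergraph: its discordance $d(e_N, \bm{x}(t))$ is exactly the sample variance of the full opinion vector, and a single concordant update of $e_N$ sends every node to the mean opinion $\frac{1}{N}\sum_{j=1}^N x_j(0)$ (which is conserved), i.e.\ to a consensus state, which is absorbing. So it suffices to show two things: (a) with probability tending to $1$ as $N \to \infty$, the initial discordance $d(e_N, \bm{x}(0))$ is below $c$; and (b) on the event $\mathcal{E} := \{d(e_N, \bm{x}(0)) < c\}$, the dynamics reach consensus almost surely. Together these give $\p[\text{consensus}] \ge \p[\text{consensus} \mid \mathcal{E}]\,\p[\mathcal{E}] = \p[\mathcal{E}] \to 1$.

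For (a), $d(e_N, \bm{x}(0))$ is the sample variance $s_N^2$ of $N$ i.i.d.\ draws from a distribution with variance $\sigma^2$, so $\e[s_N^2] = \sigma^2$ and $\var[s_N^2] \to 0$; Chebyshev's inequality (exactly as in the proof of \cref{edge_conc}, now at time $t = 0$, using the slack $c - \sigma^2 > 0$) then gives $\p[s_N^2 < c] \to 1$. For (b), the key point is that \cref{lemma:variance_decrease} applies to \emph{every} admissible update: whatever hyperedge $e$ is selected, $e \subseteq V$, and replacing the opinions on $e$ by copies of their mean cannot increase the sample variance of the full opinion collection, so $d(e_N, \bm{x}(t+1)) \le d(e_N, \bm{x}(t))$ for all $t$ and every realization of the selection sequence. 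Consequently, on $\mathcal{E}$ we have $d(e_N, \bm{x}(t)) \le d(e_N, \bm{x}(0)) < c$ for all $t$, so $e_N$ is concordant at every time. By the Borel--Cantelli lemma, the uniformly chosen hyperedge equals $e_N$ at infinitely many times almost surely; at the first such time $T$, the concordant update of $e_N$ makes $\bm{x}(T+1)$ a consensus state, which is absorbing, so the limit state $\bm{x}^*$ is a consensus. This proves (b).

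I do not expect a genuine obstacle here; the argument is short once one notices the monotonicity of $d(e_N, \cdot)$. The points that require care are that $e_N = V$ must indeed belong to $E$ (which is the hypothesis on $H$), that \cref{lemma:variance_decrease} genuinely covers all possible updates (it does, since every hyperedge is a subcollection of $V$), and that the concentration in (a) must yield the strict inequality $s_N^2 < c$ rather than merely $s_N^2 \le c$ (handled by applying Chebyshev with slack $c - \sigma^2$). As a byproduct, the same argument shows that on $\mathcal{E}$ consensus is reached in \emph{finite} time, which anticipates the finite-time convergence results of \cref{time}.
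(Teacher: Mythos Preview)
Your proposal is correct and follows essentially the same approach as the paper: use \cref{lemma:variance_decrease} to show $d(e_N,\cdot)$ is nonincreasing along any trajectory, so concordance of $e_N$ at time $0$ persists and the almost-sure eventual selection of $e_N$ yields consensus; then invoke \cref{edge_conc} (Chebyshev) to see that $\p[d(e_N,\bm{x}(0))<c]\to 1$. Your write-up is slightly more explicit about Borel--Cantelli and the strict-inequality issue, but the argument is the same.
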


\begin{proof}
By \cref{lemma:variance_decrease}, the discordance function $d( e_N, \bm{x}(t))$ {is nonincreasing} in time. Therefore, if $e_N$ is concordant at time $0$, it is concordant for all $ t$. Additionally, if $e_N$ is concordant, $H$ converges to consensus the first time that one selects the hyperedge $e_N$. With probability $1$, this selection occurs at some finite time. This shows that
\begin{align*}
	\p[\text{consensus}] &\geq \p[\text{consensus} \mid d(e_N, \bm{x}(0)) < c]\p[d(e_N, \bm{x}(0)) < c]\\
&= \p[d(e_N, \bm{x}(0)) < c] \,.
\end{align*}
By \cref{edge_conc}, $\lim_{N \to \infty}\p[\text{consensus}] = 1$.
\end{proof}
\begin{remark}
In particular, \cref{complete_variance} applies to {a} complete hypergraph. 
\end{remark}

\begin{remark}
{We emphasize that} the probability distribution from which we draw the initial opinions need not be uniform or symmetric. The only condition on the distribution is that $\sigma^2 < c$.
\end{remark}


\subsubsection{Bounded Initial Opinions}

In this subsubsection, we assume that 
the probability distribution from which we draw initial opinions is supported on a bounded interval $[a, b]$. We present results about the limit state for this important case, which includes drawing the initial opinions uniformly at random from $[0,1]$ (a focal example of much prior work on the standard dyadic DW model \cite{lorenz2007continuous, mixingBeliefs, meetDiscuss}) as a special case. {In} Theorem~\ref{complete_converge}, {which is} the main result of this subsubsection, the probability distribution from which we draw the initial opinions need not be uniform or {symmetric; it only needs to be bounded.}

In the standard dyadic DW model, it has been observed using Monte Carlo simulations {for a} complete graph (with dyadic interactions only) that when the initial opinion distribution is $\mathcal{U}(0,1)$, there is a threshold confidence bound $c^* \approx \frac{1}{2}$ such that (1) the system converges to consensus with high probability for $c > c^*$ and (2) the system converges to approximately $\lfloor \frac{1}{2c} \rfloor$ opinion clusters for $c < c^*$ \cite{mixingBeliefs, meetDiscuss, weisbuchClusters}. A consensus threshold also exists for the standard dyadic HK model, but it occurs at a smaller confidence bound (of about $c^* = 0.19$) \cite{lorenz_thesis, hk_thresh}. In \cref{complete_converge}, we prove that no such threshold exists for our hypergraph BCM and that the opinion state converges to consensus almost surely {for sufficiently large} $N$ whenever the initial opinion distribution is bounded.

\begin{lemma}\label{compact}
Suppose that $H$ is the complete hypergraph with $N$ nodes and that $c \neq 0$. Let $\bm{x}(0)$ be the initial opinion state, where {we draw} $x_i(0)$ {for each $i$} from a bounded distribution {that is} supported on $[a, b]${,} and let $\bm{x}(t)$ be the opinion state that is determined by \cref{eq:BCM_hypergraph_update}. It is then the case that the number of opinion clusters in the limit state is almost surely less than or equal to $\frac{b-a}{\sqrt{2c}} + 1$.
\end{lemma}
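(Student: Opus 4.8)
The plan is to combine three ingredients: the boundedness of the opinion values, the characterization of the limit state as an absorbing state (\cref{no_updates}), and the completeness of $H$.

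First I would observe that the opinion state stays in $[a,b]^N$ for all time. Each update in \cref{eq:BCM_hypergraph_update} replaces the opinions of the nodes in a chosen hyperedge $e$ by their mean $\bar{x}_e$, which is a convex combination of numbers that, by induction on $t$, already lie in $[a,b]$; the opinions of all other nodes are unchanged. Hence $x_i(t) \in [a,b]$ for every $i$ and $t$, and so the limit state $\bm{x}^*$ (which exists by \cref{converge}) also satisfies $x_i^* \in [a,b]$ for all $i$. In particular, the opinion value of every opinion cluster in the limit state lies in $[a,b]$.

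Next, by \cref{no_updates}, $\bm{x}^*$ is almost surely an absorbing state: for every $e \in E$, either $d(e,\bm{x}^*) \geq c$ or all nodes incident to $e$ have equal opinions. Suppose that on this almost-sure event the limit state has $k$ distinct opinion clusters, with opinion values $\gamma_1 < \gamma_2 < \cdots < \gamma_k$. Because $H$ is the complete hypergraph, for each $m \in \{1,\ldots,k-1\}$ the two-node set $e = \{i,j\}$ with $x_i^* = \gamma_m$ and $x_j^* = \gamma_{m+1}$ is a hyperedge. Its incident nodes do not all have the same opinion, so $d(e,\bm{x}^*) \geq c$. But $d(e,\bm{x}^*)$ is the unbiased sample variance of the two numbers $\gamma_m, \gamma_{m+1}$, which equals $\frac{1}{2}(\gamma_{m+1}-\gamma_m)^2$; therefore $\gamma_{m+1} - \gamma_m \geq \sqrt{2c}$.

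Finally I would telescope these gap estimates: $\gamma_k - \gamma_1 = \sum_{m=1}^{k-1}(\gamma_{m+1}-\gamma_m) \geq (k-1)\sqrt{2c}$. Since $\gamma_1, \gamma_k \in [a,b]$, we have $\gamma_k - \gamma_1 \leq b-a$, so $(k-1)\sqrt{2c} \leq b-a$ and hence $k \leq \frac{b-a}{\sqrt{2c}} + 1$ (using $c \neq 0$ to divide; when $k=1$ the bound is immediate). Every step here is elementary, so there is no real obstacle; the single place where the hypothesis does genuine work is the completeness of $H$, which is precisely what guarantees that a separating two-node hyperedge exists between every pair of consecutive clusters — the argument would break on a sparse hypergraph where such pairs need not be hyperedges.
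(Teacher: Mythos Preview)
Your proof is correct and follows essentially the same approach as the paper: both use \cref{no_updates} to reduce to an absorbing state, both observe that opinions remain in $[a,b]$, and both exploit the two-node hyperedges available in the complete hypergraph to force a $\sqrt{2c}$ separation between distinct cluster values. The only cosmetic difference is that the paper argues by contradiction via pigeonhole (too many clusters forces some pair within $\sqrt{2c}$), whereas you telescope the consecutive gaps directly; these are equivalent presentations of the same idea.
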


\begin{proof}
Let $\bm{x}^* = \lim_{t \to \infty} \bm{x}(t)$ be the limit state, and let $\gamma_1, \ldots, \gamma_m \in \mathbb{R}$ be the opinion values of the $m$ opinion clusters. Let $\psi: V \to \{1, \ldots, m\}$ map {each node to its associated opinion cluster,} such that $x_i^* = \gamma_{\psi(i)}$. By \cref{no_updates}, it suffices to show that $m > \frac{b-a}{\sqrt{2c}}$ implies that $\bm{x}^*$ is not an absorbing state.

It must be the case that $\gamma_i \in [a,b]$ for all $i$ because $x_j(t) \in [a,b]$ for all $t$ and all $j$. If $m > \frac{b-a}{\sqrt{2c}}$, then there is a pair $\gamma_i, \gamma_j$ such that $\gamma_i \neq \gamma_j$ and $|\gamma_i - \gamma_j| < \sqrt{2c}$. Let $k_i$ be a node in the $i$th opinion cluster $\psi^{-1}(i)$, let $k_j$ be a node in the $j$th opinion cluster $\psi^{-1}(j)$, and let $e = \{k_i, k_j\}$. The limit state $\bm{x}^*$ is not an absorbing state because
\begin{align*}
    0 < d(e, \bm{x}^*) =  \frac{(\gamma_i - \gamma_j)^2}{2} < c \,.
\end{align*}
\end{proof}

{The} following lemma says that as $N \to \infty$, {it is almost surely the case that} at least one of the following three outcomes occurs: (1) the opinion state converges to consensus, (2) the number of opinion clusters approaches infinity, or (3) the difference between the opinion values of different opinion clusters approaches infinity.

\begin{lemma}\label{mclusters_unstable}
Suppose that $H$ is the complete hypergraph with $N$ nodes and that $c \neq 0$. {Let $\bm{x}(0)$ be the initial opinion state, and let $\bm{x}(t)$ be the opinion state that is determined by \cref{eq:BCM_hypergraph_update}.} Let $\bm{x}^*$ be the limit state, which exists by \cref{converge}, and let $\{\gamma_1, \ldots, \gamma_m\}$ be the opinion values of the opinion clusters. It then follows that the number $m$ of opinion clusters almost surely is either $m=1$ ({i.e.,} consensus) or satisfies
\begin{equation*}
    m \times \Big( \max_{k, j} \frac{(\gamma_k - \gamma_j)^2}{c} - 1\Big) \geq N\,.
\end{equation*}
\end{lemma}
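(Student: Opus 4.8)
The plan is to exploit \cref{no_updates}, which tells us that the limit state $\bm{x}^*$ is almost surely an absorbing state: for every hyperedge $e$, either $d(e,\bm{x}^*) \geq c$ or all opinions incident to $e$ agree. Because $H$ is complete, every subset of at least two nodes is an available hyperedge, so I can feed this dichotomy carefully chosen subsets. Assume $m \geq 2$ (otherwise we are at consensus and there is nothing to prove), and let $n_i$ denote the number of nodes in the $i$th opinion cluster, so that $\sum_{i=1}^m n_i = N$.

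The key idea is to probe, for each cluster $i$, the hyperedge $e_i$ consisting of all $n_i$ nodes of cluster $i$ together with a single node drawn from some other cluster $j \neq i$ (which exists since $m \geq 2$); this is a legitimate hyperedge of the complete hypergraph $H$ since it has $n_i + 1 \geq 2$ nodes. First I would compute the sample variance of $e_i$ in the limit state: with $n_i$ opinions equal to $\gamma_i$ and one equal to $\gamma_j$, a short calculation gives $d(e_i,\bm{x}^*) = (\gamma_i - \gamma_j)^2/(n_i+1)$. Since distinct opinion clusters have distinct opinion values, this quantity is strictly positive, so the absorbing property forces $d(e_i,\bm{x}^*) \geq c$, i.e.\ $n_i + 1 \leq (\gamma_i - \gamma_j)^2/c \leq \max_{k,\ell}(\gamma_k - \gamma_\ell)^2/c$. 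Rearranging yields $n_i \leq \max_{k,\ell}(\gamma_k - \gamma_\ell)^2/c - 1$ for every $i$.

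Finally I would sum this inequality over the $m$ clusters, obtaining $N = \sum_{i=1}^m n_i \leq m\bigl(\max_{k,\ell}(\gamma_k - \gamma_\ell)^2/c - 1\bigr)$, which is exactly the claimed bound (the $\max$ ranging over cluster pairs is the quantity written $\max_{k,j}(\gamma_k-\gamma_j)^2/c$ in the statement).

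I do not expect a serious obstacle: the argument is short once one identifies the right test hyperedges. The one place to be careful is that choice. Probing the full-vertex hyperedge $e = V$ or the union of two clusters yields only a bound of the form $\max_{k,\ell}(\gamma_k - \gamma_\ell)^2 \gtrsim 2c$, independent of $m$ and $N$, which is too weak; it is the ``one cluster plus one foreign node'' hyperedges that produce a per-cluster size bound that sums to the desired inequality. A secondary point to state explicitly is that the conclusion holds only almost surely, inheriting that qualifier from \cref{no_updates}, and that completeness of $H$ enters the proof only through the availability of these specific $(n_i+1)$-node hyperedges.
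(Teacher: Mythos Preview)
Your proposal is correct and uses the same key construction as the paper: the hyperedge consisting of an entire cluster together with one node from a different cluster, whose discordance is $(\gamma_i-\gamma_j)^2/(n_i+1)$. The only cosmetic difference is organizational: the paper argues by contrapositive, using pigeonhole to locate a single cluster with $N_i > \max_{k,j}(\gamma_k-\gamma_j)^2/c - 1$ and showing that this makes $\bm{x}^*$ non-absorbing, whereas you argue directly, applying the absorbing condition to every cluster and summing the resulting per-cluster size bounds; the content is the same.
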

\begin{proof}
By \cref{no_updates}, it suffices to show that if $N > m \times \Big( \max_{k, j} \frac{(\gamma_k - \gamma_j)^2}{c} - 1\Big)$ and $m>1$, then $\bm{x}^*$ is not an absorbing state. Let $\psi: V \to \{1, \ldots, m\}$ map {each node to its associated opinion cluster,} such that $x_i^* = \gamma_{\psi(i)}$, and let $N_i = |\psi^{-1}(i)|$ be the size of the $i$th opinion cluster. If $N > m \times \Big( \max_{k, j} \frac{(\gamma_k - \gamma_j)^2}{c} - 1\Big)$, then there exists an opinion cluster $i$ such that $ N_i > \max_{k, j} \frac{(\gamma_k - \gamma_j)^2}{c} - 1$. If $m>1$, there is {an opinion} $\gamma_j$ such that $\gamma_j \neq \gamma_i$. Let $e$ be a hyperedge of size $N_i + 1$ that is incident to the $N_i$ nodes in $\psi^{-1}(i)$ and $1$ node in $\psi^{-1}(j)$. The limit-state sample mean of the nodes that are incident to $e$ is
\begin{equation*}
    \bar{x}_e^*(t) = \frac{N_i}{N_i + 1}\gamma_i + \frac{1}{N_i +1} \gamma_j\,.
\end{equation*}
The limit state $\bm{x}^*$ is not an absorbing state because
 \begin{equation*}
	 0 < d(e, \bm{x}^*) = \frac{N_i (\gamma_i - \bar{x}_e^*(t))^2 + (\gamma_j - \bar{x}_e^*(t))^2}{N_i+1} = \frac{(\gamma_i - \gamma_j)^2}{N_i + 1} < c \,.
 \end{equation*}
\end{proof}

\begin{remark}
We only apply \cref{mclusters_unstable} to the case in which the initial opinion distribution is bounded, but it holds for any initial opinion distribution.
\end{remark}

{The} following theorem says that if the initial opinion distribution is bounded, then the system reaches consensus almost surely for {sufficiently large} $N$.

\begin{theorem}\label{complete_converge}
Suppose that $H$ is the complete hypergraph with $N$ nodes and that $c \neq 0$. Let $\bm{x}(0)$ be the initial opinion {state, where we draw $x_i(0)$ for each $i$ from a bounded distribution that is supported on $[a, b]$,} and let $\bm{x}(t)$ be the opinion state that is determined by \cref{eq:BCM_hypergraph_update}. If $N > (\frac{b-a}{\sqrt{c}} +1)( \frac{(b-a)^2}{c} - 1)$, then the opinion state converges to consensus almost surely.
\end{theorem}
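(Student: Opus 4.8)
The plan is to read off the conclusion from the three lemmas just established, which between them constrain the number $m$ of opinion clusters in the limit state $\bm{x}^*$. By \cref{converge} the limit state exists, and by \cref{no_updates} it is almost surely an absorbing state; the events in \cref{compact} and \cref{mclusters_unstable} also hold almost surely, so I would work on the almost-sure event where all three conclusions hold simultaneously and show that $m \ge 2$ is impossible there once $N$ exceeds the stated bound.

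First I would record the elementary fact that the interval $[a,b]$ is invariant under the dynamics, since each update replaces the opinions on a hyperedge by their (convex) mean; hence every cluster value $\gamma_i$ lies in $[a,b]$ and $\max_{k,j}|\gamma_k-\gamma_j| \le b-a$. Now suppose $m \ge 2$. \cref{mclusters_unstable} then gives $N \le m\bigl(\max_{k,j}(\gamma_k-\gamma_j)^2/c - 1\bigr) \le m\bigl((b-a)^2/c - 1\bigr)$. If $(b-a)^2/c - 1 < 0$ this already contradicts $N \ge 1$, so I may assume $(b-a)^2/c \ge 1$; then, feeding in the bound $m \le \tfrac{b-a}{\sqrt{2c}}+1 \le \tfrac{b-a}{\sqrt{c}}+1$ from \cref{compact} and multiplying through by the nonnegative factor $(b-a)^2/c - 1$, I obtain $N \le \bigl(\tfrac{b-a}{\sqrt c}+1\bigr)\bigl((b-a)^2/c - 1\bigr)$, contradicting the hypothesis. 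Hence $m = 1$ almost surely, which is consensus.

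I do not anticipate a genuine obstacle: the substantive arguments are already contained in \cref{compact} and \cref{mclusters_unstable}, and this step is essentially their conjunction together with the crude gap bound $|\gamma_k-\gamma_j|\le b-a$. The only points needing care are the invariance of $[a,b]$ (which licenses that gap bound) and the degenerate regime $(b-a)^2 \le c$, where the stated threshold on $N$ is nonpositive so the hypothesis is vacuous, and the conclusion follows from \cref{compact} alone since then $m \le \tfrac{b-a}{\sqrt{2c}}+1 < 2$. One could also state the theorem with the tighter constant $\tfrac{b-a}{\sqrt{2c}}+1$ coming directly from \cref{compact}; I would keep the looser $\tfrac{b-a}{\sqrt c}+1$ only to match the given statement, as it is still a valid upper bound.
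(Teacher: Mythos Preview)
Your proposal is correct and follows essentially the same route as the paper: combine the cluster-count bound from \cref{compact} with the inequality from \cref{mclusters_unstable}, together with the observation that all $\gamma_i\in[a,b]$ so that $\max_{k,j}(\gamma_k-\gamma_j)^2\le(b-a)^2$, to force $m=1$. The paper's proof is terser---it does not separately treat the degenerate regime $(b-a)^2\le c$ or spell out the invariance of $[a,b]$---but the logic is identical; your extra care on those points and your remark about the sharper constant $\tfrac{b-a}{\sqrt{2c}}+1$ are accurate refinements rather than a different argument.
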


\begin{proof}
Let $\bm{x}^*$ be the limit state, which exists by \cref{converge}, and let $\{\gamma_1, \ldots, \gamma_m\}$ be the opinion values of the opinion clusters. By \cref{compact}, $m \leq \frac{b-a}{\sqrt{c}} +1$ almost surely. It is necessarily true that $\gamma_i \in [a,b]$ for all $ i$, so $\max_{k, j} (\gamma_k - \gamma_j)^2 < (b-a)^2$. By \cref{mclusters_unstable}, $m = 1$ almost surely if $N > (\frac{b-a}{\sqrt{c}} +1)( \frac{(b-a)^2}{c} - 1)$.
\end{proof}


\subsubsection{Normally-Distributed Initial Opinions}

Assume that the initial opinions are normally distributed with mean $\mu$ and variance $\sigma^2$. When $\sigma^2 < c$, \cref{complete_variance} implies that the probability of consensus for the {$N$-node} complete hypergraph approaches $1$ as $N \to \infty$. Based on numerical evidence, we conjecture that the probability of reaching consensus for the {$N$-node} complete hypergraph approaches $1$ as $N \to \infty$ even when $\sigma^2 \geq c$, unless $c = 0$. Because the hypergraph is {complete,} \cref{mclusters_unstable} also applies.

In \cref{complete_normal}, we show a typical simulation when $\sigma^2 > c$. We have reduced the number of time steps by requiring that the hyperedge that we select at time $0$ is concordant. This requirement has no effect on the subsequent behavior or on the system's limit, {but it significantly reduces the number of time steps; we will see why this is true in our proof of \cref{normal_time}.} Observe that the opinion state converges to consensus. In $1000$ trials of our BCM for a confidence bound of $c = 1$, a complete hypergraph with $N = 200$ nodes, and an initial opinion distribution with standard deviation $\sigma = 1.2$, we find that the opinion state converges to consensus in every trial.

\begin{figure}
    \centering
    \includegraphics[width = .6\textwidth]{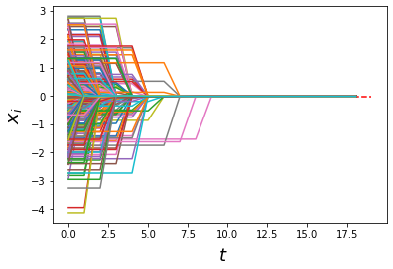}
    \caption{A typical simulation of our hypergraph BCM on the complete hypergraph with $N = 500$ nodes. Each curve traces the evolution of one node's opinion. We draw the initial opinions from $\mathcal{N}(0, \sigma^2)$ with $\sigma = 1.2$. We set the confidence bound to $c = 1$. We have reduced the number of time steps by requiring that the hyperedge that we choose at time $0$ is concordant. The opinion state converges to consensus.}
    \label{complete_normal}
\end{figure}

{Based on the results of our Monte Carlo simulations, we} conjecture that when $c \neq 0$ and the initial opinion distribution is normal, the probability of consensus approaches $1$ even when $\sigma^2 > c$. We now provide a heuristic explanation of this conjecture, although we do not have a mathematically rigorous proof of it.

We fix the variance to be $\sigma^2 > c$. At each {time step,} we select a hyperedge uniformly at random. Let ${e^*}$ be the first concordant hyperedge that we choose. In \cref{time}, we will show that if $e$ is an arbitrary hyperedge, then $\p[d(e, \bm{x}(0)) < c \mid |e| = n] \leq r^{n-1}$, where $r = e^{\frac{1}{2}(1 - c/\sigma^2)}\sqrt{(c/\sigma^2)}  < 1$. If it is true (and we suspect that it is) that $\p[d(e, \bm{x}(0)) < c \mid |e| = n] = ar^n$ for some constants $a$ and $r < 1$, then one can calculate that
\begin{equation*}
	\e[|e^*|] = \frac{(N (1+r)^{N-1} - N - 1}{\Big(\frac{1}{r} (1+r)^N - N - \frac{1}{r}\Big)} \sim \frac{r}{r+1}N \quad \text{as }\, N \rightarrow \infty\,. 
\end{equation*}	

That is, if we assume that $\p[d(e, \bm{x}(0)) < c \mid |e| = n] = ar^n$, then the expected size of $e^*$ grows linearly with $N$. In \cref{exp_conc_edge_size}, we show the results of Monte Carlo simulations to estimate $\e[|e^*|]$ as a function of $N$ without using the assumption $\p[d(e, \bm{x}(0)) < c \mid |e| = n] = ar^n$. As hypothesized, we observe a linear relationship.

\begin{figure}
    \centering
    \includegraphics[width = .5\textwidth]{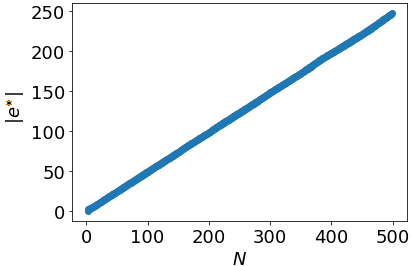}
    \caption{An estimate of $\e[|e^*|]$ as a function of the {number $N$ of nodes}, where $e^*$ is the first concordant hyperedge that we select and the initial opinions are normally distributed with a standard deviation of $\sigma = 1.2$. The confidence bound is $c = 1$. For each possible hyperedge size $n \in \{2, \ldots, N\}$, we run $10,000$ trials. For each trial, we randomly draw $n$ opinions from $\mathcal{N}(0, \sigma^2)$ and calculate the sample variance, which equals the discordance of those $n$ opinions. We approximate ${a_n} := \p[d(e, \bm{x}(0)) < c \mid |e| = n]$ by letting $\widehat{a_n}$ be the fraction of trials that result in a concordant set of $n$ opinions. For each $N \in \{2, \ldots, 500\}$, we have that $\e[|e^*|] \approx \frac{\sum_{n=2}^N n \widehat{a_n} {N \choose n}}{\sum_{n=2}^N \widehat{a_n} {N \choose n}}$.
    }
    \label{exp_conc_edge_size}
\end{figure}

Because $e^*$ is the first concordant hyperedge that we select, we update the nodes of $e^*$ to the opinion $\bar{x}_{e^*}(0)$, which is the mean of the initial opinions of the nodes that are incident to $e_0$. Let $\mu$ be the mean of the opinion distribution. If $e$ is an arbitrary hyperedge, then $\bar{x}_e(0) \to \mu$ in distribution as $|e| \to \infty$. It is necessarily also true that $\bar{x}_{e^*}(0) \to \mu$ in distribution as $|e^*| \to \infty$ because the sample mean and sample variance of a normal distribution are independent of each other (by Basu's theorem). Because $\e[|e^*|] \to \infty$ as $N \to \infty$, it follows that $\bar{x}_{e^*}(0) \to \mu$ in distribution as $N \to \infty$. In other words, we are updating the opinions of the nodes in $e^*$ to approximately {the value} $\mu$.

The observations above imply that (1) the first concordant hyperedge that we update includes a fraction of the nodes that is approximately constant as $N \to \infty$ (even for large $\sigma^2$) and (2) when we update all of those nodes, we are updating them to approximately {the value} $\mu$, which is the mean of the distribution. This decreases the total sample variance of {the opinions of the $N$ nodes} and increases the clustering of opinions near $\mu$, making it even more likely that the next hyperedge that we update will also be large and have a mean opinion that is centered near $\mu$. Eventually, the opinions converge to a consensus value {that is} near $\mu$.


\subsection{Our BCM on Hypergraphs with Community Structure}\label{sec:comm}

In this subsection, we examine our BCM on hypergraphs with planted community structure. 

Suppose that we partition the set of nodes in a hypergraph into communities and that each community has its own independent distribution of initial opinions. Specifically, we study our BCM on hypergraphs that we generate using the hypergraph stochastic block model (HSBM) of \cite{hsbm-paper}. An HSBM is a generative model for producing hypergraphs with community structure. Like a traditional SBM for ordinary graphs \cite{comm_review, comm_fortunato}, the probability that a hyperedge exists depends on the community memberships of its nodes. In this HSBM, the probability that a hyperedge exists also depends on the size of the hyperedge and on the number of nodes in the hypergraph. More precisely, let $\psi: V \to \{1,\ldots, k\}$ be a partition of the set of nodes into $k$ communities, where we assume without loss of generality that every community is non-empty. We denote the $i$th community by $C_i := \psi^{-1}(i)$. For $N = |V|$ and $n \in \{2, \ldots, N\}$, let $\alpha_{n, N} \in [0,1]$. For each $n \in \{2, \ldots, N\}$, let $B^n$ be a symmetric $k$-dimensional tensor of order $n$ whose entries take values in $[0,1]$. We generate a hypergraph as follows. For every subset $e = \{i_1, \ldots, i_n\} \in \mathcal{P}(V)$ of nodes, we include $e$ in the hypergraph with a probability of $\alpha_{n, N} B^n_{\psi(i_1), \ldots, \psi(i_n)}$.

In the simplest version of this HSBM, {each} intra-community hyperedge exists with an independent and uniform probability $p$ and {each} inter-community hyperedge exists with an independent and uniform probability $q$.

\begin{definition}
Consider the HSBM of \cite{hsbm-paper} with parameters $\alpha_{n,N} = 1$ for all $n, N$ and
\begin{equation*}
	B^n_{\psi(i_1), \ldots, \psi(i_n)} = 
\begin{cases}p \,, & \psi(i_1) = \cdots = \psi(i_n) \\
q \,, & \text{otherwise}
	\end{cases}
\end{equation*}
for some $p, q \in [0,1]$. We will refer to this HSBM as a $(p, q)$\emph{-HSBM}.
\end{definition}

{If} $q = 0$, the communities in {a} $(p, q)$-HSBM are disjoint and the opinions in {a} community cannot influence the opinions in other communities.

\begin{definition}
Let $\psi: V \to \{1, \ldots, k\}$ be a partition of the set of nodes into $k$ non-empty communities. The opinion state $\bm{x}^*$ is \textbf{polarized} if there are {opinions} $\gamma_1, \ldots, \gamma_k \in \mathbb{R}$, not all equal, such that $x_i^* = \gamma_{\psi(i)}$ for all $i$.
\end{definition}

{In} other words, $\bm{x}^*$ is polarized if each community is at consensus but the communities are not at consensus with each other. For example, if $q = 0$ and it is not the case that every community has the same initial mean opinion, then the opinion state converges to a limit state that {either is polarized or includes at least one} community whose nodes are not at consensus within the community.

{\begin{remark}
As we mentioned previously, some researchers refer to this situation as ``opinion fragmentation'' and reserve the term ``polarized'' for when there are exactly two opinion clusters in an opinion state.
\end{remark}}

The following theorem says that if $q \neq 0$, then the probability that the limit state of our BCM on a $(p,q)$-HSBM hypergraph is polarized approaches $0$ as $N \to \infty$.

\begin{theorem}\label{hsbm}
Suppose that {we generate $H$} from a $(p,q)$-HSBM with partition $\psi: V \to \{1, \ldots, k\}$ and that $q \neq 0$. Additionally, suppose that $c \neq 0$. Let $\bm{x}(0)$ be the initial opinion state, {where we draw $x_i(0)$ for each $i$ from a bounded distribution that is supported on $[a, b]$,} and let $\bm{x}(t)$ be the opinion state that is determined by \cref{eq:BCM_hypergraph_update}. Let $\bm{x}^*$ be the limit state, which exists by \cref{converge}. It then follows that the probability that $\bm{x}^*$ is polarized approaches $0$ as $N \to \infty$.
\end{theorem}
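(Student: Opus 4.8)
The strategy is to show that, with probability tending to $1$ as $N \to \infty$, the random hypergraph $H$ contains an inter-community hyperedge on which a nontrivial concordant update is possible from any polarized state, so that by \cref{no_updates} the limit state is almost surely not polarized. Set $n_0 := \lceil (b-a)^2/c \rceil$, which is finite because $c \neq 0$ (and we may assume $b > a$, since otherwise all initial opinions coincide and polarization has probability $0$ trivially). For each $N$, let $C_{\max}$ be a community of maximal size; since there are only $k$ communities, $|C_{\max}| \geq \lceil N/k \rceil$. For every community $C_j$ with $j \neq \max$, consider the subsets $e \subseteq C_{\max} \cup C_j$ with $|e \cap C_{\max}| = n_0$ and $|e \cap C_j| = 1$: there are $\binom{|C_{\max}|}{n_0}\,|C_j| \geq \binom{\lceil N/k\rceil}{n_0}$ of them, each has a non-constant tuple of community labels, so in the $(p,q)$-HSBM each is included in $H$ independently with probability $q$. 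Let $\mathcal{E}_N$ be the event that for every such $j$ at least one of these subsets is a hyperedge of $H$. A union bound over the at most $k-1$ values of $j$ gives $\p[\mathcal{E}_N^{\,c}] \leq (k-1)(1-q)^{\binom{\lceil N/k\rceil}{n_0}}$, which tends to $0$ as $N \to \infty$ because $0 \leq 1-q < 1$ and $\binom{\lceil N/k\rceil}{n_0} \to \infty$ (here we hold $k$ fixed as $N$ grows).

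Next I would show that $\mathcal{E}_N$ and a polarized limit state are almost surely incompatible. Suppose $\bm{x}^*$ is polarized, with opinion values $\gamma_1, \dots, \gamma_k$ not all equal, and write $\gamma_{\max}$ for the value of the community $C_{\max}$. Since the values are not all equal, there is a community $C_j$ with $\gamma_j \neq \gamma_{\max}$; on $\mathcal{E}_N$ there is a hyperedge $e \in E(H)$ with $n_0$ nodes in $C_{\max}$ (each at opinion $\gamma_{\max}$) and one node in $C_j$ (at opinion $\gamma_j$). Exactly as in the computation in the proof of \cref{mclusters_unstable} (with $N_i = n_0$), $d(e, \bm{x}^*) = (\gamma_{\max} - \gamma_j)^2 / (n_0 + 1)$, which is strictly positive because $\gamma_j \neq \gamma_{\max}$ and strictly less than $c$ because $(\gamma_{\max} - \gamma_j)^2 \leq (b-a)^2 < (n_0 + 1)c$. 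Hence $\bm{x}^*$ is not an absorbing state. But \cref{no_updates}, applied conditionally on the realized hypergraph $H$, says the limit state is almost surely absorbing; taking expectations over $H$ yields $\p[\mathcal{E}_N \cap \{\bm{x}^* \text{ polarized}\}] = 0$. Combining the two steps, $\p[\bm{x}^* \text{ polarized}] \leq \p[\mathcal{E}_N^{\,c}] + \p[\mathcal{E}_N \cap \{\bm{x}^*\text{ polarized}\}] \leq (k-1)(1-q)^{\binom{\lceil N/k\rceil}{n_0}} \to 0$.

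The substantive point — and the main obstacle — is the probabilistic bookkeeping rather than any hard estimate: one must (i) extract a community whose size grows with $N$ (the pigeonhole bound $|C_{\max}| \geq \lceil N/k\rceil$), (ii) keep the pool of candidate inter-community hyperedges large enough that $(1-q)^{\#\text{candidates}} \to 0$ while simultaneously keeping each candidate's discordance below $c$ (this is why the candidates are taken lopsided, with $n_0$ nodes in one community and a single node in the other), and (iii) be careful that $\mathcal{E}_N$ depends only on the edge set of $H$ whereas the limit state depends additionally on the initial opinions and on the random sequence of selected hyperedges, so that \cref{no_updates} is invoked after conditioning on $H$. Everything else is the short discordance computation already carried out for \cref{mclusters_unstable}.
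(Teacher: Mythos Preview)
Your proof is correct and follows essentially the same approach as the paper: pick the largest community, show that with probability tending to $1$ the hypergraph contains a lopsided inter-community hyperedge whose discordance at a polarized state lies strictly between $0$ and $c$, and invoke \cref{no_updates} to conclude. Your version is in fact slightly cleaner, since you fix a deterministic size $n_0 = \lceil (b-a)^2/c \rceil$ and require a witness hyperedge to \emph{every} other community, so that the event $\mathcal{E}_N$ depends only on $H$; the paper instead lets the required hyperedge size depend on the random values $\gamma_1,\gamma_j$.
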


\begin{proof}
Because $\bm{x}^*$ is almost surely an absorbing state by \cref{no_updates}, it suffices to show that $\p[\bm{x}^* \text{ polarized and absorbing}] \to 0$ as $N \to \infty$. Suppose that $\bm{x}^*$ is a polarized, absorbing state. Because $\bm{x}^*$ is polarized, there are {opinions} $\gamma_1, \ldots, \gamma_k$ that are not all equal and that satisfy $x_i^* = \gamma_{\psi(i)}$. Without loss of generality, let $C_1$ be the largest community. It is necessarily true that $|C_1| \geq \lceil N/k \rceil$. Because $\{\gamma_i\}$ are not all equal, there is a community $j$ such that $\gamma_1 \neq \gamma_j$.

To find a contradiction, suppose that there is a hyperedge $e \in E$ of size $n > |\gamma_1 - \gamma_j|^2/c$ such that $e$ is incident to $n-1$ nodes in $C_1$ and {$1$} node in $C_j$. It follows that
\begin{equation*}
    0 < d(e, \bm{x}^*) = \frac{(\gamma_1 - \gamma_j)^2}{n} < \frac{(b-a)^2}{n} < c\,,
\end{equation*}
which contradicts the assumption that $\bm{x}^*$ is an absorbing state. As $N \to \infty$, the probability that there is no such $e \in E$ is
\begin{equation*}
    \lim_{N \to \infty} \p[\nexists \,e \in E]
    = \lim_{N \to \infty} (1-q)^{\sum_{i = \max\{0, 1 - n_0\}}^{|C_1| - n_0 + 1} {|C_1| \choose n_0 + i - 1} \times |C_j|}
    \leq \lim_{N \to \infty} (1-q)^{\lceil N/k \rceil}= 0 \,,
\end{equation*}
where $n_0 = \left\lfloor \frac{|\gamma_1 - \gamma_j|}{c} \right\rfloor$. 
\end{proof}

The following theorem says that if we also impose the condition $p = 1$ (so that each community forms a hyperclique), then the probability of reaching consensus approaches $1$ as $N \to \infty$ if all communities are sufficiently large.

\begin{theorem}\label{hsbm_p1}
Suppose that we generate a hypergraph $H$ from a $(p, q)$-HSBM with partition $\psi: V \to \{1, \ldots, k\}$ and that $p = 1$ and $q \neq 0$. Let $\bm{x}(0)$ be the initial opinion state, {where we draw $x_i(0)$ for each $i$ from a bounded distribution that is supported on $[a, b]$,} and let $\bm{x}(t)$ be the opinion state that is determined by \cref{eq:BCM_hypergraph_update}. Additionally, suppose that $|C_i| > (\frac{b-a}{\sqrt{c}} + 1)(\frac{(b-a)^2}{c}-1)$ for all $i \in \{1, \ldots, k\}$ and that $c \neq 0$. It then follows that $\p[\text{consensus}] \to 1$ as $N \to \infty$.
\end{theorem}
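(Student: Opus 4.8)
The plan is to combine \cref{hsbm}, which rules out polarization, with \cref{complete_converge}, which forces each community to reach internal consensus, and then show that once every community is internally at consensus, a single inter-community hyperedge suffices to merge all the community opinion values into one. First I would invoke \cref{no_updates}: the limit state $\bm{x}^*$ is almost surely an absorbing state, so it suffices to bound the probability that the limit state is an absorbing state that is \emph{not} consensus. Since each community $C_i$ forms a hyperclique (because $p = 1$), \cref{complete_converge} applies to each $C_i$ individually: because $|C_i| > (\frac{b-a}{\sqrt{c}} + 1)(\frac{(b-a)^2}{c} - 1)$ and the initial opinions in $C_i$ are drawn from a bounded distribution supported on $[a,b]$, the restriction of the dynamics to $C_i$ reaches consensus almost surely. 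One subtlety here is that the global dynamics selects hyperedges from all of $E$, not just the intra-community ones, so I would need to argue that the intra-community sub-dynamics on each hyperclique still satisfies the hypotheses of \cref{complete_converge} — this follows because the intra-community hyperedges are chosen infinitely often with probability $1$ (Borel--Cantelli, as in \cref{no_updates}), and inter-community updates only move opinions around within $[a,b]$, so the bounded-support hypothesis is preserved, and the cluster-counting arguments of \cref{compact} and \cref{mclusters_unstable} apply verbatim to each community.

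Next I would combine this with \cref{hsbm}: with probability approaching $1$ as $N \to \infty$, the limit state is not polarized. Conditioning on the event that (a) every community is internally at consensus and (b) the limit state is not polarized, I claim the limit state must be global consensus. Indeed, if it were not, there would be two communities $C_i$ and $C_j$ with distinct opinion values $\gamma_i \neq \gamma_j$ in the limit state, with each internally at consensus — but that is exactly the definition of a polarized state, contradicting (b). Hence $\p[\text{consensus}] \geq \p[(a) \text{ and } (b)] \to 1$ as $N \to \infty$, since each of the two events has probability tending to $1$ (the first by $k$ applications of \cref{complete_converge}, noting $k$ is fixed, and the second by \cref{hsbm}).

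The main obstacle I anticipate is making rigorous the claim that \cref{complete_converge} applies to the induced dynamics on each community. The cleanest route is probably not to re-run \cref{complete_converge} as a black box, but to observe directly that its \emph{conclusion} — that the number of opinion clusters within $C_i$ is almost surely at most $1$ whenever $|C_i|$ exceeds the stated threshold — follows from \cref{no_updates} applied to the full hypergraph together with the counting lemmas \cref{compact} and \cref{mclusters_unstable}. Specifically, suppose in the limit state some community $C_i$ splits into $m \geq 2$ internal clusters. Because $C_i$ is a hyperclique, every subset of $C_i$ is a hyperedge, so the absorbing-state condition from \cref{no_updates} forces, exactly as in \cref{compact} and \cref{mclusters_unstable}, that $m \leq \frac{b-a}{\sqrt{c}} + 1$ and $m(\max_{k,\ell}\frac{(\gamma_k - \gamma_\ell)^2}{c} - 1) \geq |C_i|$; since the cluster values lie in $[a,b]$, this yields $|C_i| \leq (\frac{b-a}{\sqrt{c}} + 1)(\frac{(b-a)^2}{c} - 1)$, contradicting the hypothesis. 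So almost surely every community is internally at consensus, and the argument concludes as above.
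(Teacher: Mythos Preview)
Your proposal is correct and follows essentially the same approach as the paper: establish internal consensus within each hyperclique via the argument of \cref{complete_converge} (your ``cleanest route'' is exactly what the paper means by ``by the same argument as in the proof of \cref{complete_converge}''), and then rule out the case where the community values $\gamma_i$ are not all equal. The only packaging difference is that you invoke \cref{hsbm} as a black box to exclude polarization, whereas the paper re-derives its key step inline (constructing a large inter-community hyperedge with $n-1$ nodes in the largest community and one node elsewhere, and showing it exists with probability tending to $1$); both are equivalent once you observe that, given almost-sure internal consensus, $\p[\text{consensus}] = 1 - \p[\text{polarized}]$.
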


\begin{proof}
By \cref{converge}, the limit state $\bm{x}^* := \lim_{t \to \infty} \bm{x}(t)$ exists. By the same argument as in the proof of \cref{complete_converge}, the nodes in each community converge to consensus almost surely because $|C_i| > (\frac{b-a}{\sqrt{c}}+1)(\frac{(b-a)^2}{c} -1)$ for all $i \in \{1, \ldots, k\}$. That is, there exist {opinions} $\gamma_1, \ldots, \gamma_k \in [a,b]$ such that $x^*_i = \gamma_j$ for all $ i \in C_j$. If $\gamma_i = \gamma_j$ for all communities $i$ and $j$, then the opinion state converges to consensus. Otherwise, we assume without loss of generality that $C_1$ is the largest community and we let $C_j$ be a community such that $\gamma_j \neq \gamma_1$. Suppose that there is a hyperedge $ e \in E$ of size $n > \frac{(b-a)^2}{c}$ such that $e$ is incident to $n-1$ nodes in $C_1$ and {$1$} node in $C_j$. By the same argument as in the proof of \cref{hsbm}, the probability that such a hyperedge exists approaches $1$ as $N \to \infty$. {If} $e$ does exist, {then} $\bm{x}^*$ is not an absorbing state. By \cref{no_updates}, $\bm{x}^*$ is almost surely an absorbing state.
\end{proof}

In another version of the HSBM in \cite{hsbm-paper}, one requires that every inter-community hyperedge is small.

\begin{definition}
Consider the HSBM of \cite{hsbm-paper} with parameters $\alpha_{n, N} = 1$ for all $n, N$ and
\begin{equation*}
B^n_{\psi(i_1), \ldots, \psi(i_n)} = \begin{cases}
    p \,, & \psi(i_1) = \cdots = \psi(i_n) \\
    q \,, & \text{there exist } j \text{ and } k \text{ such that } \psi(i_j) \neq \psi(i_k) \text{ and } n \leq M \\
    0 \,, & \text{otherwise}
    \end{cases}
\end{equation*}
for some $M \geq 2$ and $p, q \in [0,1]$. We will refer to this HSBM as a $(p, q, M)$\emph{-HSBM}.
\end{definition}
For fixed $p, q, M$ and $N \to \infty$, the communities are ``almost'' disjoint; that is, the {number of inter-community hyperedges divided by the total number of hyperedges} approaches $0$. {The following theorem, which is useful to contrast with \cref{hsbm},} gives conditions under which a polarized opinion state is an absorbing state. \cref{thm_polarized} implies that echo chambers can form when all of the inter-community {hyperedges are sufficiently small.}

\begin{theorem}\label{thm_polarized}
Suppose that we generate a hypergraph $H$ from a $(p, q, M)$-HSBM with the partition $\psi: V \to \{1, \ldots, k\}$. Let $\gamma_1, \ldots, \gamma_k \in \mathbb{R}$, and let $\bm{x}^*$ be the polarized opinion state with $x_i^* = \gamma_{\psi(i)}$. If $\min_{i, j}\{(\gamma_i - \gamma_j)^2 \mid \gamma_i \neq \gamma_j\}/M > c$, then $\bm{x}^*$ is an absorbing state.
\end{theorem}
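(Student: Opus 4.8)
The plan is to verify directly from the definition that a polarized state satisfies the absorbing condition: for every hyperedge $e$ that can occur in a hypergraph $H$ generated by the $(p,q,M)$-HSBM, either $d(e,\bm{x}^*) = 0$ or $d(e,\bm{x}^*) \geq c$. Note that this statement needs neither \cref{converge} nor \cref{no_updates}; it is purely a structural fact about polarized states. The feature of the $(p,q,M)$-HSBM that drives the argument is that an inter-community hyperedge of size larger than $M$ is included with probability $0$, so every hyperedge $e \in E$ is of exactly one of two types: (i) $e$ lies entirely within a single community $C_i$; or (ii) $e$ meets at least two communities and $|e| \leq M$.

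For a type-(i) hyperedge, every node $j \in e$ has $x_j^* = \gamma_i$, so $d(e,\bm{x}^*) = 0$ and the absorbing condition holds trivially. So I would then fix a type-(ii) hyperedge, set $n := |e| \leq M$, and look at the multiset of incident opinions $\{x_j^* : j \in e\} \subseteq \{\gamma_1,\ldots,\gamma_k\}$. If these opinions are all equal, again $d(e,\bm{x}^*) = 0$; otherwise the task is to show $d(e,\bm{x}^*) \geq c$. The clean way to do this is via the pairwise form of the sample variance, $\sum_{j\in e}(x_j^* - \bar{x}_e^*)^2 = \frac{1}{n}\sum_{\{j,\ell\}\subseteq e}(x_j^* - x_\ell^*)^2$, which gives $d(e,\bm{x}^*) = \frac{1}{n(n-1)}\sum_{\{j,\ell\}\subseteq e}(x_j^* - x_\ell^*)^2$. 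Each pair $\{j,\ell\}$ with $\gamma_{\psi(j)} \neq \gamma_{\psi(\ell)}$ contributes at least $\delta^2 := \min\{(\gamma_i - \gamma_j)^2 \mid \gamma_i \neq \gamma_j\}$ to this sum, and a short counting argument bounds the number of such discordant pairs from below by $n-1$: since the incident opinions are not all equal, some opinion value occurs among the $x_j^*$ with multiplicity $m$ satisfying $1 \leq m \leq n-1$, and that value class forms exactly $m(n-m) \geq n-1$ pairs with the remaining nodes, all of them discordant. Hence $\sum_{\{j,\ell\}\subseteq e}(x_j^* - x_\ell^*)^2 \geq (n-1)\delta^2$, so $d(e,\bm{x}^*) \geq \delta^2/n \geq \delta^2/M > c$ by hypothesis, which completes the verification that $\bm{x}^*$ is an absorbing state.

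Everything here is elementary and I do not expect a genuine obstacle; the only step that warrants care is the counting of discordant pairs, and it is precisely that step which forces the exponent on the gap in the hypothesis to be $M$ rather than something larger. Indeed, keeping only a single discordant pair yields merely $d(e,\bm{x}^*) \geq \delta^2/(n(n-1))$, and a parallelogram-type bound using two discordant nodes yields $d(e,\bm{x}^*) \geq \delta^2/(2(n-1))$; neither of these is strong enough to reach $c$ once $M \geq 3$ under the stated hypothesis, so one really does need the observation that at least $n-1$ of the pairs inside a type-(ii) hyperedge are discordant in a polarized state.
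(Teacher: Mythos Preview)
Your proof is correct and follows essentially the same approach as the paper: both reduce to the pairwise identity $d(e,\bm{x}^*) = \frac{1}{n(n-1)}\sum_{j<\ell}(x_j^*-x_\ell^*)^2$ (the paper derives it by expansion and groups terms by community as $\frac{1}{n(n-1)}\sum_{i<j} n_i n_j(\gamma_i-\gamma_j)^2$), and then bound the sum from below by $(n-1)\delta^2$ via the extremal configuration with $n-1$ nodes carrying one opinion value and $1$ node carrying another. Your counting argument $m(n-m)\geq n-1$ is a slightly more explicit justification of this lower bound than the paper's assertion about the minimizing configuration, but the substance is identical.
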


\begin{proof}
Without loss of generality, assume that $\gamma_i \neq \gamma_j$ if $i \neq j$. (If not, we combine any communities {with} the same opinion.) Let $e \in E$ be a hyperedge. Let $n = |e|$, and let $n_i = |\{j \in e \mid \psi(j) = i\}|$ be the number of nodes that are incident to $e$ and belong to community $i$. We have that
\begin{align}
    d(e, \bm{x}^*) &= \frac{1}{n-1}\Bigg(\sum_i n_i \Big(\gamma_i - \frac{1}{n}\sum_j n_j \gamma_j\Big)^2\Bigg) \notag \\
    &= \frac{1}{n-1} \Bigg(\sum_i n_i \gamma_i^2 - \frac{1}{n}\Big(\sum_i n_i \gamma_i\Big)^2\Bigg) \notag\\
    &= \frac{1}{n(n-1)} \Big(\sum_i n_i \gamma_i^2(n- n_i)
    - 2\sum_i \sum_{j < i}n_i n_j \gamma_i \gamma_j\Big) \notag\\
    &= \frac{1}{n(n-1)} \Big(\sum_i\sum_{j \neq i}n_in_j \gamma_i^2 -2\sum_i \sum_{j < i} n_i n_j \gamma_i \gamma_j\Big) \notag\\
    &= \frac{1}{n(n-1)}\sum_i \sum_{j < i} n_i n_j(\gamma_i - \gamma_j)^2 \,. \label{d_formula}
\end{align}
Let $\{i_*, j_*\} = \text{argmin}_{i, j} \{(\gamma_i - \gamma_j)^2 \mid \gamma_i \neq \gamma_j\}$. If there is an $\ell$ such that $n_{\ell} = n$, then it follows that $e \subseteq C_{\ell}$. Therefore, $x_i^* = \gamma_{\ell}$ for all $i \in e$, so $d(e, \bm{x}^*) = 0$. Otherwise, $n_{\ell} \neq n$ for all $\ell$. Fixing $n$ and enforcing the constraint that $n_{\ell} \neq n$ for all $\ell$, it follows that \cref{d_formula} is minimized either when 
\begin{equation*}
	n_{\ell} = \begin{cases} 
		0\,, & \ell \neq i_*, j_* \\ n-1\,, & \ell = i_* \\1\,, & \ell = j_*
			\end{cases}
\end{equation*}	
or when 
\begin{equation*}
	n_{\ell} = \begin{cases} 
				0\,, & \ell \neq i_*, j_* \\ 1\,, & \ell = i_* \\n-1\,, & \ell = j_*\,.
		\end{cases}
\end{equation*}
Therefore, $d(e, \bm{x}^*) \geq \frac{\min\{(\gamma_i - \gamma_j)^2 \mid \gamma_i \neq \gamma_j\}}{M} > c$ if $n_{\ell} \neq n$ for all $n$. {Therefore, for all} $e \in E$, it must be the case that either $d(e, \bm{x}^*) = 0$ or $d(e, \bm{x}^*) > c$.
\end{proof}

In \cref{polarized}, we show a typical simulation when the conditions of \cref{thm_polarized} are satisfied. In the limit state, the communities are polarized. All nodes in community $C_i$ converge to {the} opinion $\gamma_i = \sum_{j \in C_i} x_j(0)$, which is the mean of the initial opinions in $C_i$. By \cref{thm_polarized}, we know that this polarized opinion state is an absorbing state.

\begin{figure}
    \centering
    \includegraphics[width = .7\textwidth]{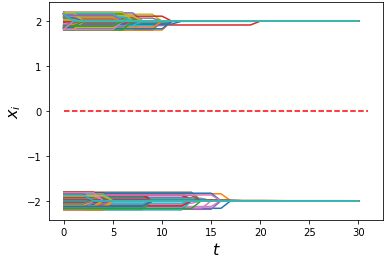}
    \caption{A typical simulation when the conditions of \cref{thm_polarized} are satisfied. There are $N = 1000$ nodes, which we assign to two equal-sized communities. We generate the hypergraph on which we run our BCM from a $(p, q, M)$-HSBM with $p = 1$, $q = 1$, and $M = 2$. The confidence bound is $c =1$. The initial opinions of the nodes in community $1$ are $x_i(0) \sim \mathcal{U}(1.8, 2.2)$, and the initial opinions of the nodes in community $2$ are $x_i(0) \sim \mathcal{U}(-2.2, -1.8)$. In the limit state, all nodes in community $1$ have opinion $\gamma_1 \approx 2$ and all nodes in community $2$ have opinion $\gamma_2 \approx -2$. This polarized opinion state is an absorbing state.}
    \label{polarized}
\end{figure}

\begin{remark}
By the same strategy as in the proofs of \cref{hsbm}, \cref{hsbm_p1}, and \cref{thm_polarized}, one can show that echo chambers do not usually form in a hypergraph that we generate using a $(p,q,M)$-HSBM if the initial opinions of the different communities are sufficiently close to each other. {In particular, if} we draw the initial opinions from a bounded interval $[a,b]$ with $(b-a)^2/M < c$ and if $|C_i| > (\frac{b-a}{\sqrt{c}})(\frac{(b-a)^2}{c} -1)$ for all $i$, then the probability of reaching consensus approaches $1$ as $N \to \infty$.

\end{remark}


\subsection{Our BCM on Sparse Hypergraphs}\label{sparse}

We now study our hypergraph BCM on sparse $G(N,\bm{m})$ hypergraphs. The $G(N, \bm{m})$ model is a generative hypergraph model that is defined analogously to the Erd\H{o}s--R\'{e}nyi $G(N, m)$ generative graph model. Each hypergraph that one constructs from the $G(N, \bm{m})$ model has $N$ nodes; for each possible hyperedge size $i \in \{2, \ldots, N\}$, we choose $m_i$ hyperedges of that size uniformly at random to include in the hypergraph.

In our Monte Carlo simulations, we set $N = 1000$ and $m_i = \max\{100, {N \choose i}\}$ for all $i$. We run $1000$ simulations with a confidence bound of $c = 1$ and initial opinions that we draw uniformly at random from $[-2, 2]$. In all trials, we find that the opinion state converges to consensus. We also run $1000$ trials with a confidence bound of $c = 1$ and initial opinions that we draw from the normal distribution with mean $\mu = 0$ and standard deviation $\sigma = 1.2$. All of these trials also converge to consensus. In \cref{sparse_plots}, we show a typical simulation for each of {these} two initial opinion distributions.

As we increase $N$ and increase the variance of the initial opinions, we observe that the time {that} it takes to converge increases but that the opinion state still converges to consensus.

\begin{figure}
    \centering
    \subfloat[$x_i(0) \sim \mathcal{U}(-2, 2)$]{\includegraphics[width = .5\textwidth]{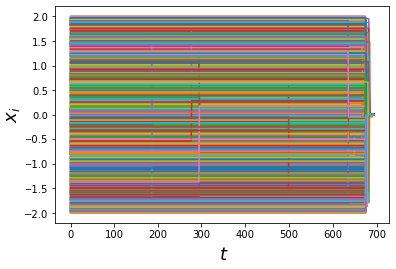}}
    \subfloat[$x_i(0) \sim \mathcal{N}(0, 1.2)$]{\includegraphics[width = .5\textwidth]{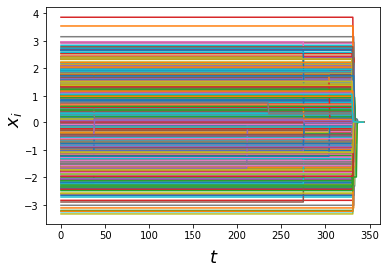}}
    \caption{Typical simulations on sparse $G(N, \bm{m})$ hypergraphs with a confidence bound of $c = 1$ and $N = 1000$ nodes that have (a) uniformly distributed initial opinions and (b) normally distributed initial opinions.
    Both of these simulations converge
    to consensus.}
    \label{sparse_plots}
\end{figure}


\subsection{Our BCM on an Enron E-mail Hypergraph}\label{sec:enron}

In \cref{sparse}, we studied our BCM on sparse hypergraphs. However, it is typically also the case that the hypergraphs that one constructs from empirical data are not merely sparse; they also have the property that their hyperedges are 
small in size in comparison to the number of nodes. As one example, we use a hypergraph that Benson et al. \cite{benson} constructed from the well-known (and infamous) Enron e-mail data set \cite{enron_original}. In this Enron e-mail hypergraph, nodes represent Enron employees and hyperedges represent e-mails between them. Each hyperedge is incident to the sender and recipients of one e-mail message. There are $N = 143$ nodes, but the maximum hyperedge size is only $18$. 

To examine our hypergraph BCM on the Enron e-mail hypergraph, we run $1000$ simulations with initial opinions that are uniformly distributed in $[0,1]$. In all $1000$ trials, the opinion state converges to {consensus}. In \protect\cref{enron_uniform}, we show the results of a typical simulation. We also run $1000$ simulations on the Enron e-mail hypergraph with initial opinions that are normally distributed with a mean of $\mu = 0$ and a variance of $\sigma^2 = 1$. The confidence bound is $c = 1$. In all trials, the opinion state converges to {consensus}. In \protect\cref{enron_normal}, we show the results of a typical simulation.

\begin{figure}
    \centering
    \subfloat[$x_i(0) \sim \mathcal{U}(0, 1)$]{\includegraphics[width=0.495\textwidth]{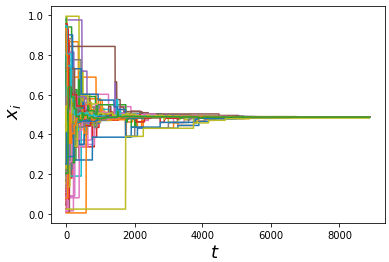} \label{enron_uniform}}
    \subfloat[$x_i(0) \sim \mathcal{N}(0, 1)$]{\includegraphics[width=0.495\textwidth]{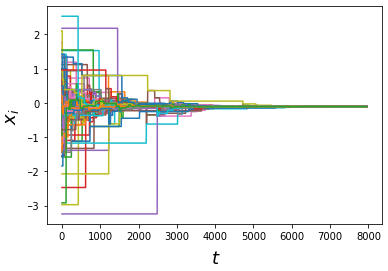}\label{enron_normal}}
    \caption{A typical simulation on the Enron e-mail hypergraph with a confidence bound of $c = 1$ and nodes with (a) uniformly distributed initial opinions and (b) normally distributed initial opinions. Both of these simulations converge to {consensus}.}
    \label{enron}
\end{figure}


\section{Convergence Time}\label{time} 

In this section, we analyze the convergence time of our hypergraph BCM.


\subsection{Conditions for Convergence in Finite Time}\label{finite_time}

We say that the opinion state \textit{converges in finite time} if there is a time $T$ such that $\bm{x}(T)$ is an absorbing state. At time $T$, no further opinion updates can occur. We use the following lemma to prove \cref{complete_finite_time}.

\begin{lemma}\label{primeSeq}
Let ${\tilde{e}} \in \mathcal{P}(V)$ be a subset of {the nodes in a hypergraph}. There is a finite sequence $\{{\tilde{e}_i}\}_{i=1}^m$ in $\mathcal{P}(V)$ such that (1) $|\tilde{e}_i|$ is prime for all $i \geq 1$ and (2) consecutively updating the nodes that are incident to $\tilde{e}_1, \ldots, \tilde{e}_m$ to their respective mean opinions would result in the same opinion state as updating the nodes that are incident to $\tilde{e}$ to {the mean of their opinions.}
\end{lemma}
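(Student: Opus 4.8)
The plan is to prove the statement by strong induction on $n := |\tilde{e}|$, allowing the sequence to depend on the current opinion state $\bm{x}$. If $n \leq 1$, then updating $\tilde{e}$ to its mean changes nothing, so the empty sequence suffices; if $n$ is prime, take $m = 1$ and $\tilde{e}_1 = \tilde{e}$. So assume $n$ is composite and write $n = pk$ with $p$ prime and $k \geq 2$ (hence $2 \leq k < n$), so that the inductive hypothesis is available for $k$.

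First I would partition $\tilde{e}$ into $k$ blocks $B_1, \dots, B_k$ of size $p$ and label $B_j = \{b_{j,1}, \dots, b_{j,p}\}$. \emph{Phase 1:} update each block $B_j$ (which has prime size $p$) to its mean, so that afterwards every node of $B_j$ holds the common value $m_j := \frac{1}{p}\sum_{i \in B_j} x_i$; since the blocks are equal-sized, $\frac{1}{k}\sum_{j=1}^{k} m_j = \frac{1}{pk}\sum_{i \in \tilde{e}} x_i = \bar{x}_{\tilde{e}}$. \emph{Phase 2:} for each $s \in \{1, \dots, p\}$ in turn, apply the inductive hypothesis to the ``column'' $\{b_{1,s}, \dots, b_{k,s}\}$, a set of $k$ nodes that currently carries the values $m_1, \dots, m_k$; this yields a finite prime-sized sequence of subsets of that column whose net effect is to replace those $k$ values by their average $\frac{1}{k}\sum_{j} m_j = \bar{x}_{\tilde{e}}$. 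Concatenating Phase 1 with the $p$ sequences of Phase 2 gives a finite sequence, all of whose terms have prime size.

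The step I would be most careful about is verifying non-interference, which is what legitimizes the ``parallel'' reuse of the size-$k$ hypothesis across the $p$ columns: the $k$ block updates act on pairwise-disjoint sets, the $p$ column sequences act on pairwise-disjoint sets, and nothing in either phase touches a node outside $\tilde{e}$. Consequently, processing column $s$ leaves the already-processed columns (all at value $\bar{x}_{\tilde{e}}$) and all nodes outside $\tilde{e}$ unchanged, so after the full sequence every node of $\tilde{e}$ holds $\bar{x}_{\tilde{e}}$ and every other node is unchanged -- which is exactly the state obtained by updating $\tilde{e}$ to the mean of its opinions. Finally, the recursion terminates because each composite case is reduced to a strictly smaller instance $k < n$ (the prime-sized block updates require no further recursion), so strong induction on $n$ completes the argument.
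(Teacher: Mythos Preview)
Your proof is correct and uses essentially the same idea as the paper: factor $n = pk$, arrange $\tilde{e}$ as a $p \times k$ grid, average along one dimension and then the other. The only cosmetic difference is that the paper first partitions into $p$ blocks of size $k$ (handling these by induction) and then averages $k$ columns of prime size $p$, whereas you swap the roles (prime-sized blocks first, induction on the columns); both orderings work for the same reason, and your non-interference observation is exactly what justifies either one.
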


\begin{proof}
Let $n = |\tilde{e}|$. If $n$ is prime, we are done. This also proves the base case $n = 2$. If $n$ is not prime, we can write $n = pm$, where $p$ is prime and $m < n$. Without loss of generality, {suppose that} $\tilde{e} = \{1,\ldots, n\}$. Updating the nodes of $\tilde{e}$ to the mean opinion in $\tilde{e}$ at time $t$ results in the opinion state 
\begin{equation}\label{eq:e0state}
    x_i= \begin{cases}
    \frac{1}{n} \sum_{i \in \tilde{e}} x_i(t) = \frac{1}{p} \sum_{k=0}^{p-1} \Big( \frac{1}{m} \sum_{j=1}^m x_{km + j}(t) \Big) \,, & i \leq n \\
    x_i(t) \,, & i > n\,.
    \end{cases}
\end{equation}
Updating the nodes of $\{km + 1, \ldots, km + m\} \in \mathcal{P}(V)$ for each $k \in \{0, \ldots, p-1\}$ to their respective mean opinions at time $t$ results in the opinion state
\begin{equation}\label{eq:intermediate_state}
    x_i = \begin{cases}
    	\frac{1}{m} \sum_{\ell = 1}^m x_{km+ \ell}(t)\,, & i = km +j \,, \,\, 1 \leq j \leq m \,, \,\,0 \leq k \leq p-1 \\
    x_i(t)\,, & i > n\,.
    \end{cases}
\end{equation}

By induction (because $m < n$), there is a sequence of elements of prime size in $\mathcal{P}(V)$ such that updating this sequence results in the same opinion state as updating $\{km + 1, \ldots, km + m\}$. Concatenating these $p$ sequences ({there is} one for each $k$) yields a sequence $\tilde{e}_1, \ldots, \tilde{e}_{\ell}$ of elements of prime size in $\mathcal{P}(V)$ such that updating the nodes of $\tilde{e}_1, \ldots, \tilde{e}_{\ell}$ to their respective mean opinions at time $t$ results in the opinion state \cref{eq:intermediate_state}. Updating the sequence $\{1, m + 1, \ldots, (p-1)m+1\}$, $\{2, m + 2, \ldots, (p-1)m + 2\}$, \ldots, $\{m, 2m, \ldots, pm\}$ of elements in $\mathcal{P}(V)$ then results in the opinion state \cref{eq:e0state}. Therefore, $\tilde{e}_1, \ldots, \tilde{e}_{\ell}, \{1, \ldots, (p-1)m+1\}, \{2, \ldots, (p-1)m + 2\}, \ldots, \{m, \ldots, pm\}$ is the desired sequence of prime-sized elements in $\mathcal{P}(V)$.
\end{proof}

\begin{theorem}\label{complete_finite_time}
Let $H$ be a hypergraph such that $\{ e \in \mathcal{P}(V) \mid |e| \text{ {is} prime}\} \subseteq E$. {Let $\bm{x}(0)$ be any initial opinion state, and let $\bm{x}(t)$ be the opinion state that is determined by \cref{eq:BCM_hypergraph_update}.} It follows that $\bm{x}(t)$ {almost surely converges in finite time.}
\end{theorem}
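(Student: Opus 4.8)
The plan is to work on the probability-one event that $\bm{x}(t)\to\bm{x}^*$ (\cref{converge}) and that $\bm{x}^*$ is an absorbing state (\cref{no_updates}), and to show that on this event the dynamics almost surely reach an absorbing state at a finite time. We may assume $c>0$, since if $c=0$ no update ever occurs and $\bm{x}(0)$ is itself absorbing. Write $I_1,\dots,I_p$ for the maximal sets of nodes that share a common limiting opinion $\gamma_1,\dots,\gamma_p$ (so the $\gamma_\ell$ are pairwise distinct), and call a hyperedge \emph{splitting} if it is not contained in any single $I_\ell$. The proof has two parts: first, that the dynamics eventually ``decouple'' along the clusters, and second, that each cluster gets collapsed to a single value in finitely many steps by exploiting the presence of all prime-sized hyperedges together with \cref{primeSeq}.

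Step 1 is to show that almost surely there is a finite time $T_1$ after which no splitting hyperedge is ever concordant. Fix a splitting $e\in E$ and suppose $d(e,\bm{x}(t))<c$ for infinitely many $t$. Since the hyperedge selected at each step is uniform on $E$ and independent of the past, the conditional (L\'evy) form of the second Borel--Cantelli lemma implies that almost surely $e$ is also \emph{selected} at infinitely many of those times. At each such update the nodes of $e$ jump to $\bar{x}_e(t)$; because $e$ meets at least two clusters and $\bm{x}(t)$ is close to $\bm{x}^*$ once $t$ is large, $\bar{x}_e(t)$ stays bounded away from at least one coordinate $x_i(t)$ with $i\in e$, so $\|\bm{x}(t+1)-\bm{x}(t)\|_\infty$ does not tend to $0$ --- contradicting $\bm{x}(t)\to\bm{x}^*$. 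Hence each splitting hyperedge is concordant only finitely often, and since $E$ is finite a single $T_1$ works for all of them. In particular, for $t\ge T_1$ every update that actually occurs lies inside one cluster, so the sub-processes on $I_1,\dots,I_p$ run independently.

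Step 2 is to show each cluster reaches exact consensus in finite time. Choose $\epsilon>0$ with $\epsilon^2<c/2$, so that any subset of a single cluster whose opinions all lie within $\epsilon$ of that cluster's limit value has discordance (equal to the sample variance) less than $c$ and is therefore concordant, and note that averaging such a subset keeps all of that cluster's opinions within $\epsilon$ of its limit. Pick $T_2\ge T_1$ so that for all $t\ge T_2$ every $x_i(t)$ is within $\epsilon$ of the limit value of the cluster containing $i$. By \cref{primeSeq} applied to $\tilde e=I_\ell$, there is a fixed finite sequence of prime-sized subsets of $I_\ell$ --- all of which belong to $E$ by hypothesis --- whose consecutive updates collapse $I_\ell$ to its mean; concatenating these sequences over $\ell=1,\dots,p$ yields a fixed finite word $\Sigma$ of hyperedges in $E$. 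Partition time into consecutive blocks of length $|\Sigma|$: the event that the selections in a given block are exactly $\Sigma$ has probability $|E|^{-|\Sigma|}>0$, and these block-events are independent, so by Borel--Cantelli one of them occurs in a block beginning at some time $s^*\ge T_2$. Running $\Sigma$ from $s^*$ brings every cluster to consensus, and since $s^*\ge T_1$ no splitting hyperedge is concordant at that state; hence $\bm{x}(s^*)$ is absorbing, so $\bm{x}(t)$ converges in finite time.

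I expect Step 1 to be the main obstacle: without it a splitting hyperedge could in principle remain concordant forever (for instance if $d(e,\bm{x}^*)=c$ exactly), and then collapsing the clusters would not be guaranteed to produce an absorbing state. What makes Step 1 decisive is that $T_1$ is defined in terms of the realized trajectory, so it constrains \emph{every} later state, including those produced by executing $\Sigma$; the substantive work there is the Borel--Cantelli argument establishing $T_1<\infty$. A secondary technical point is that $T_1$ and $T_2$ are not stopping times (they depend on the limit), which is why Step 2 uses blocks starting at deterministic times and only needs that some qualifying block eventually occurs rather than that one occurs at a prescribed time.
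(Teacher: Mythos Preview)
Your proof is correct and follows the same overall architecture as the paper's: first show that after some finite time the clusters decouple (no splitting hyperedge can update), then use \cref{primeSeq} to collapse each cluster exactly via a finite word of prime-sized hyperedges that almost surely gets selected.

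The one substantive difference is how the decoupling (your Step~1) is obtained. The paper invokes the internals of Lorenz's stabilization theorem: it takes the block sets $I_1,\dots,I_p$ and the time $t_0$ from \cref{Alim} directly, and argues from the block-diagonal form of $\lim A(0,t)$ that any hyperedge selected after $t_0$ that is not contained in a single $I_i$ must be discordant. You instead define the clusters as the level sets of the limit $\bm{x}^*$ and prove the decoupling from scratch with a conditional Borel--Cantelli argument (essentially the same mechanism as in \cref{no_updates}): a splitting hyperedge that is concordant infinitely often would be \emph{selected while concordant} infinitely often, producing jumps bounded away from zero and contradicting convergence. Your route is more self-contained in that it uses only the \emph{statement} of \cref{converge} rather than the block decomposition inside its proof; the paper's route is shorter because that decomposition has already been set up. Either way, Step~2 is identical in spirit, and your care about $T_1,T_2$ not being stopping times (handled by independent blocks at deterministic times) is exactly right.

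Two cosmetic points you may want to tidy: primeSeq is stated for $|\tilde e|\ge 2$, so singleton clusters should simply be omitted from $\Sigma$; and the absorbing time is $s^*+|\Sigma|$ rather than $s^*$.
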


\begin{proof}
Let the time $t_0$, the matrices $K_1, \ldots, K_p$, and the node sets $I_1, \ldots, I_p$ be defined as in equation \cref{Alim}. This equation implies for all $ j, k \in I_i$ that
\begin{equation}\label{xlim}
	\lim_{t \to \infty} x_j(t) = \lim_{t \to \infty} x_k(t)\,.
\end{equation}
Updating the nodes of $I_1 \in \mathcal{P}(V)$ to {the mean of their opinions} results in consensus among the nodes of $I_1$. By \cref{primeSeq}, there is a sequence $\{e_i\}_{j=1}^n$ of prime-sized elements in $\mathcal{P}(V)$ such that consecutively updating $\{e_i\}$ results in the same opinion state as updating $I_1$. By hypothesis, the hypergraph $H$ includes $e_i$ for all $ i$ because $e_i$ is prime-sized. There is a time $t_1 \geq t_0$ such that $d(e_i, \bm{x}(t)) < c$ for all $ i$ for all times $t \geq t_1$. The probability of consecutively choosing the hyperedges of $\{e_i\}_{i=1}^n$ starting at a given time is $1/|E|^n > 0$, where $E$ is the hyperedge set of $H$. Because this probability is positive, {it is almost surely the case that the event of consecutively choosing these hyperedges occurs infinitely often}. Therefore, there is almost surely some time {after $t_1$ that we} consecutively choose the hyperedges $\{e_i\}$ for updating. Let $t_2 > t_1$ be the time that we choose the last hyperedge $e_n$ of the sequence. At this time, $x_j(t_2) = x_k(t_2)$ for all $ j, k \in I_1$. Similarly, we can find times $t_3 \leq \cdots \leq t_{p+1}$ for the nodes of $I_2, \ldots, I_p$. By equation \cref{Alim}, any hyperedge $e_t$ that we choose at $t \geq t_1$ is discordant if $e_t$ is not contained in some $I_i$. Therefore, for $t \geq t_{p+1}$, it follows that $d(e, \bm{x}(t)) = 0$ if there is an $i$ such that $e \subseteq I_i$; otherwise, $d(e, \bm{x}(t)) > c$. Therefore, $\bm{x}(t_{p+1})$ is an absorbing state.
\end{proof}

\begin{remark}
\cref{complete_finite_time} applies to {a} complete hypergraph.
\end{remark}

The following theorem gives a partial converse to \cref{complete_finite_time}.

\begin{theorem}\label{not_finite_time}
Let $H$ be a hypergraph with $N$ nodes and hyperedge set $E$. Suppose that there is a subset $e = \{i_1, \ldots, i_p\}\not\in E$ of nodes such that $e$ has prime size $p$, and suppose that the subhypergraph that is induced by $\{i_1, \ldots, i_p\}$ is connected. (The connectivity requirement implies that $p \geq 3$.) It then follows that there is an initial opinion state $\bm{x}(0)$ such that the opinion state $\bm{x}(t)$ {that is} determined by \cref{eq:BCM_hypergraph_update} {does} not converge in finite time.
\end{theorem}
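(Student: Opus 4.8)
The plan is to exhibit an explicit initial opinion state from which the sample path never reaches an absorbing state, no matter which hyperedges are selected (this is even stronger than the ``almost surely'' in the statement). The key is to exploit the primality of $p$ through $p$-adic valuations, which is the natural counterpart to the way prime-sized hyperedges were used in \cref{complete_finite_time}: there, having all prime-sized hyperedges present lets one simulate any update, whereas here a missing prime-sized hyperedge creates an obstruction. Write $S = \{i_1, \dots, i_p\}$ and let $E_S = \{f \in E : f \subseteq S\}$; by hypothesis the hypergraph $(S, E_S)$ is connected, and since $e = S \notin E$, every $f \in E_S$ satisfies $|f| < p$, hence $\gcd(|f|, p) = 1$ because $p$ is prime. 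I would fix an integer $M$ that is coprime to $p$ and ``sufficiently large'' (in a sense made precise along the way) and take $x_{i_1}(0) = 1/M$, $x_{i_2}(0) = \dots = x_{i_p}(0) = 0$, and $x_j(0) = M$ for every $j \in V \setminus S$.

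First I would prove a confinement lemma by induction on $t$: the opinions of the nodes in $S$ always lie in $[0, 1/M]$, those of the nodes in $V \setminus S$ remain equal to $M$, and no hyperedge meeting both $S$ and $V \setminus S$ is ever activated. The first two claims hold because each update replaces opinions by an average and hence stays within the current convex hull; the third holds because, under the inductive hypothesis, a straddling hyperedge has opinion values that include $M$ together with values in $[0,1/M]$, so its sample variance exceeds $c$ once $M$ is large enough (and one takes the minimum over the finitely many such hyperedges). Consequently the opinions on $S$ evolve autonomously, through updates of hyperedges in $E_S$ alone, and---after also requiring $M > 1/(2\sqrt{c})$---every hyperedge in $E_S$ is always concordant, since its opinion values span an interval of length at most $1/M$.

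Next I would establish the $p$-adic obstruction: $v_p(x_i(t)) \geq 0$ for all $t$ and all $i \in S$, where $v_p$ denotes the $p$-adic valuation. This holds at $t = 0$, since $v_p(1/M) = 0$ and $v_p(0) = +\infty$, and it is preserved under an update of $f \in E_S$ because $v_p\bigl(\frac{1}{|f|}\sum_{j \in f} x_j(t)\bigr) \geq \min_{j \in f} v_p(x_j(t)) - v_p(|f|) = \min_{j \in f} v_p(x_j(t))$, using $\gcd(|f|, p) = 1$. Meanwhile the mean of the opinions on $S$ is conserved---each update of an $f \subseteq S$ preserves $\sum_{i \in f} x_i$, and by the confinement lemma no other update touches the nodes of $S$---so it equals $\bar{x}_S(0) = \frac{1}{pM}$, whose valuation is $-1 < 0$. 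Hence $x_i(t) \neq \bar{x}_S(t)$ for every $i \in S$ and every $t$, so the nodes of $S$ are never all at consensus. Since $(S, E_S)$ is connected, at every time $t$ there is then a non-constant hyperedge $f \in E_S$ (otherwise connectivity would force consensus on $S$); by the confinement lemma this $f$ is concordant, so $0 < d(f, \bm{x}(t)) < c$, which shows that $\bm{x}(t)$ is not an absorbing state. Therefore the opinion state does not converge in finite time.

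The substantive part of the argument is identifying the right invariant: choosing $M$ coprime to $p$ plants a factor of $p$ in the denominator of $\bar{x}_S$ that can never be cancelled by an average over a proper (hence $\gcd$-with-$p$-equal-to-one-sized) subset of $S$. The remaining work is routine bookkeeping that should nonetheless be done with care: reconciling the constraints on $M$ (coprime to $p$, large enough that every subset of $S$ has sample variance below $c$, and large enough that every straddling hyperedge has sample variance above $c$) and checking the straddling-hyperedge variance bound uniformly over hyperedge sizes up to $N$. I do not expect any serious obstacle beyond this.
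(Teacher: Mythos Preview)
Your proposal is correct and follows essentially the same strategy as the paper's own proof: isolate the nodes of $S$ from the rest by placing the remaining opinions far away, and then use the primality of $p$ to show that no sequence of averages over proper subsets of $S$ can ever hit the conserved mean $\bar{x}_S$. The paper phrases the arithmetic obstruction via explicit prime factorizations of the running denominators rather than $p$-adic valuations, and swaps which node in $S$ carries the distinguished initial value, but the content is the same (the one slip is that your threshold $M > 1/(2\sqrt{c})$ should read $M > 1/\sqrt{2c}$, which is immaterial since $M$ is taken large anyway).
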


\begin{proof}
Without loss of generality, $e = \{1, \ldots, p\}$. By the continuity of the discordance function, there is an $ \epsilon > 0$ such that if $x_1(0), \ldots, x_p(0) \in [0, \epsilon]$, then $d(e', \bm{x}(0)) < c$ for all $e' \subseteq e$. We choose $r$ such that $1/2^r < \epsilon$, and we let $x_1(0) = 0$ and $x_i(0) = 1/2^r$ for all $i \in \{2, \ldots, p\}$.
Additionally, there is an $M > \epsilon$ such that if $x_{p+1}(0), \ldots, x_N(0) > M$, then $d(e', \bm{x}(0)) > c $ for all $ e' \in E$ that satisfy $e' \not\subseteq \{1, \ldots, p\}$ and $e' \not\subseteq \{p+1, \ldots, n\}$. Let $x_{p+1}(0) = \cdots = x_N(0) = M.$ Because of these initial {values,} the nodes in $\{1, \ldots p\}$ never interact with the nodes in $\{p+1, \ldots, n\}$. Therefore, $x_1(t), \ldots, x_p(t) \in [0,\epsilon]$ for all $t$ and every $e' {\subseteq} e$ is concordant for all $ t$. Because the subhypergraph that is induced by the nodes $\{1, \ldots, p\}$ is connected, 
\begin{equation*}
	\lim_{t \to \infty} x_i(t) = \frac{1}{p} \sum_{j=1}^p x_j(0) = \frac{p-1}{2^rp}\, \,\text{  for all  }\,\, i \in \{1, \ldots, p\}\,.
\end{equation*}	
Let $e_t = \{i_1, \ldots, i_k\}\in E$ be the hyperedge that we choose at time $t$. If $e_t \not\subseteq e$, then $x_i(t+1) = x_i(t)$ for all $i \in e$. Otherwise, $e_t \subset e$ and for all $i \in e_t$, we have
\begin{equation*}
    x_i(t+1) = \frac{1}{k} \sum_{j=1}^k x_j(t) \,,
\end{equation*} 
{with $k \in \{2, \ldots, p-1\}$,} because $|e| = p$ and $e_t$ is a strict subset of $e$. By induction on $t$, there exist $a_{ij}(t) \in \mathbb{N}\cup \{0\}$ and $n_k(t) \in \mathbb{N} \cup \{0\}$ such that 
\begin{equation*}
	x_i(t+1) = \frac{1}{\prod_{k=2}^{p-1} k^{n_k(t)}} \sum_{j=1}^p a_{ij}(t) x_j(0) = \frac{1}{2^r\prod_{k=2}^{p-1} k^{n_k(t)}} \sum_{j=2}^p a_{ij}(t) \,\,\text{ for all }\,\, i \in e\,. 
\end{equation*}
For all $t$, we have
\begin{equation}\label{eq:finite_time}
	(p-1)2^r \prod_{k=2}^{p-1} k^{n_k(t)} \neq 2^r p \sum_{j=2}^p a_{ij}(t)
\end{equation}
because the left-hand side of \cref{eq:finite_time} is a non-zero integer without $p$ in its prime factorization and the right-hand side of \cref{eq:finite_time} is either $0$ (if $\sum a_{i_j}(t) = 0$) or a non-zero integer with $p$ in its prime factorization. Consequently, $x_i(t) \neq \lim_{t\to \infty} x_i(t)$ for all $t$ and all $i \in e$. Therefore, $\bm{x}(t)$ does not converge in finite time.
\end{proof}

Together, \cref{complete_finite_time} and \cref{not_finite_time} partially characterize the conditions for finite-time convergence of our BCM. If $H$ is a hypergraph whose hyperedge set includes all prime-sized subsets of nodes, then the opinion state {almost surely converges in finite time}. However, if the set of hyperedges does not include some prime-sized subset of nodes and the subhypergraph that is induced by those nodes is connected, then {it is not the case that the opinion state almost surely converges in finite time.} We do not have a characterization of the convergence time {when} this connectivity condition is not satisfied for any of the prime-sized elements of $\mathcal{P}(V)$ that are not in the {set of hyperedges.} However, we expect that ``most'' hypergraphs do not fall into this missing case. The connectivity condition is not hard to satisfy. For example, in the $G(N, \bm{m})$ model, we expect the subhypergraph that is induced by any {set $\{i_1, \ldots, i_p\}$ of nodes} to be connected whenever $p$ is sufficiently larger than the index of the first non-zero entry of $\bm{m}$. The vast majority of hypergraphs that are produced by the $G(N, \bm{m})$ model satisfy the conditions of either \cref{complete_finite_time} or \cref{not_finite_time}.


\subsection{A Phase Transition at $\sigma^2 = c$}

We now study the rate of convergence of the opinion state in our hypergraph BCM. We focus on {complete hypergraphs,} for which we {observe} {that there is} a phase transition in convergence time when the confidence bound is $c = \sigma^2$. {We prove that the convergence time grows at least exponentially fast with the number $N$ of nodes when $\sigma^2 > c$ and the initial opinions are normally distributed.} 

In \cref{fig:BCM_hyper_conv_time}, we simulate our BCM with $c = 1$ on {the} complete hypergraph with $50,000$ nodes and initial opinions that we seed independently by setting $x_i(0)\sim\mathcal{N}(0,\sigma^2)$ for $\sigma\in[0.9,1.1]$. We plot an empirical convergence time $t^*$, which we set to be the earliest time that the discordance function satisfies $d(e=V,\bm{x})<10^{-5}$. Our results are consistent with the existence of a phase transition in convergence time.

\begin{figure}
  \begin{center}\includegraphics[width=0.6\textwidth]{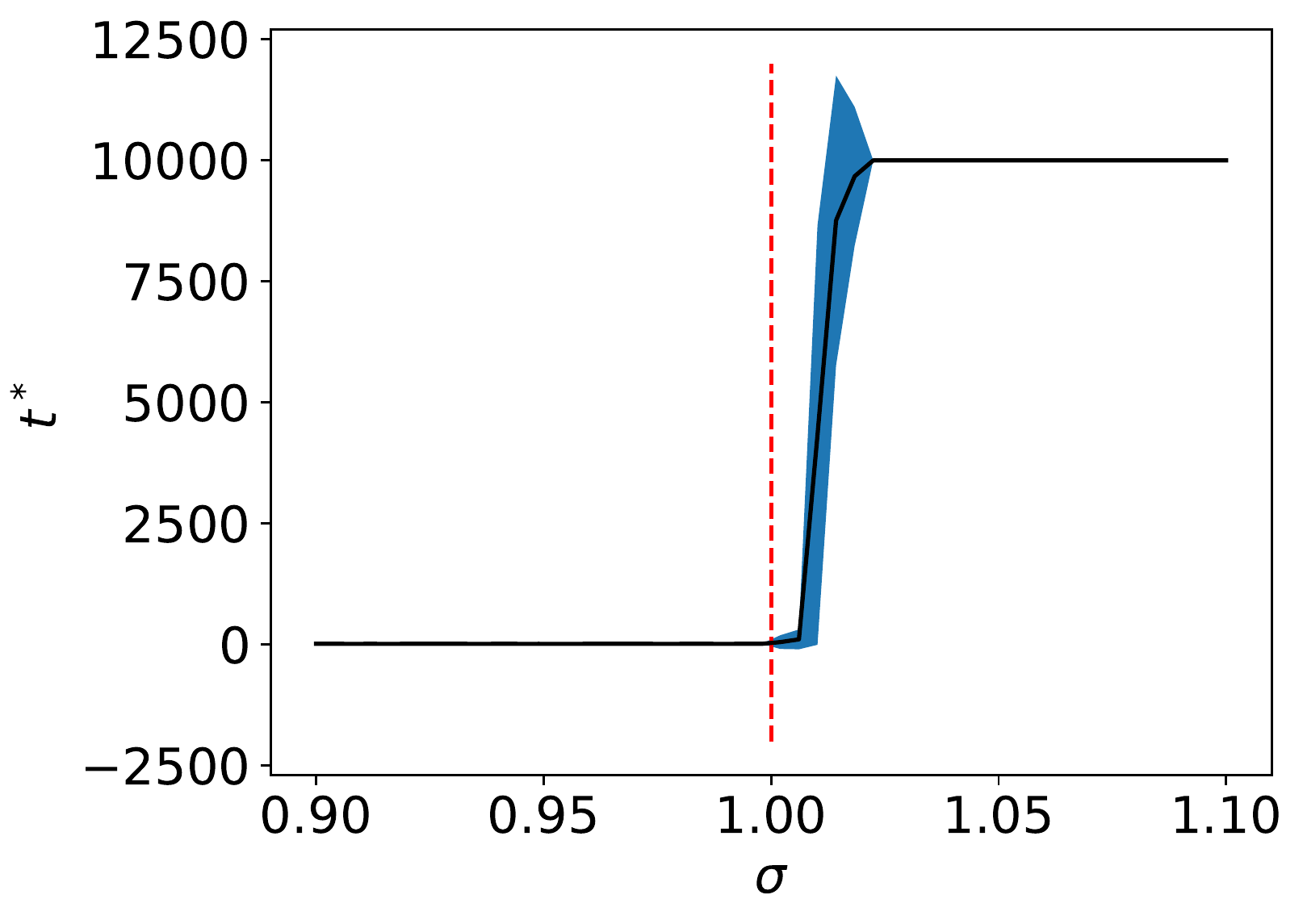}\end{center}
  \caption{Empirical convergence time {of} our BCM on {the} complete hypergraph with $50,000$ nodes and opinions that we seed independently using $x_i(0)\sim\mathcal{N}(0,\sigma^2)$. We consider all uniformly-spaced values $\sigma\in[0.9,1.1]$ with a step size of $\Delta \sigma = 0.004$. The empirical convergence time $t^*$ is the first time that the discordance function of the opinion state satisfies $d(V,\bm{x})<10^{-5}$. If the system does not reach such an opinion state by $t=10^4$, we record $t^*$ as $10^4$. We simulate 20 trials for each value of $\sigma$. The black curve gives the mean of $t^*$ over the trials, and the blue area depicts one standard deviation from the mean. We include a dashed red line at $\sigma^2=1 = c$ for reference.
  }
 \label{fig:BCM_hyper_conv_time}
\end{figure} 

In \cref{normal_time}, we prove that the convergence time grows at least exponentially fast as a function of $N$ if $c < \sigma^2$. The proof relies on \cref{edge_conc}, where we calculated the value of $\lim_{n \to \infty} \p[d(e, \bm{x}(0)) < c \mid |e| = n]$ for any initial opinion distribution with finite variance $\sigma^2$ and showed that there is a transition at $\sigma^2 = c$. We also need a bound on the convergence rate in this limit. {Using the inequalities that we derived in the proof of \cref{edge_conc} and} the fact that $\var[s_n^2] = O(\frac{1}{n})${, we see} that the convergence rate is $O(\frac{1}{n})$ whenever $\sigma^2 \neq c$. When the initial opinions are normally distributed, we can derive a much tighter bound on the convergence rate.

\begin{lemma}\label{normal_edge_rate}
Suppose that we draw the initial opinions from a normal distribution with variance $\sigma^2 \neq c$, and let $\lambda = \frac{c}{\sigma^2}$. It follows that
\begin{align*}
    \p[d(e, \bm{x}(0)) < c \mid |e| = n] &\geq 1 - \Big (e^{\frac{1}{2}(1- \lambda)} \sqrt{\lambda}\Big)^{n-1} \,, \qquad{\lambda > 1} \,, \\
    \p[d(e, \bm{x}(0)) < c \mid |e| = n] &\leq \Big( e^{\frac{1}{2}(1 - \lambda)}\sqrt{\lambda}\Big)^{n-1} \,, \qquad{\lambda < 1}\,.
\end{align*}
Therefore, $\p[d(e, \bm{x}(0)) < c \mid |e| = n]$ converges exponentially fast as $n \to \infty$.
\end{lemma}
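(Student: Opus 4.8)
The plan is to observe that, conditioned on $|e| = n$, the discordance $d(e,\bm{x}(0))$ is exactly the sample variance $s_n^2$ of $n$ i.i.d.\ draws from $\mathcal{N}(\mu,\sigma^2)$, so by a classical result (Cochran's theorem) the quantity $(n-1)s_n^2/\sigma^2$ is chi-squared distributed with $n-1$ degrees of freedom. Writing $\lambda = c/\sigma^2$, the event $\{s_n^2 < c\}$ equals $\{\chi^2_{n-1} < (n-1)\lambda\}$, so the two inequalities in the lemma are a lower-tail and an upper-tail large-deviations estimate for a chi-squared variable. I would obtain both from the standard Chernoff (exponential Markov) bound, using the moment generating function $\e[e^{t\chi^2_k}] = (1-2t)^{-k/2}$, valid for $t < 1/2$, together with its reflection $\e[e^{-t\chi^2_k}] = (1+2t)^{-k/2}$, valid for all $t > 0$.

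For the case $\lambda < 1$, I would bound, for any $t>0$,
\[
\p[\chi^2_{n-1} < (n-1)\lambda] \le e^{t(n-1)\lambda}\,(1+2t)^{-(n-1)/2} = \Big(e^{t\lambda}\,(1+2t)^{-1/2}\Big)^{n-1},
\]
and minimize the base over $t>0$. The first-order condition gives $1+2t = 1/\lambda$, i.e.\ $t = (1-\lambda)/(2\lambda)$, which is positive precisely because $\lambda < 1$; substituting yields the base $e^{(1-\lambda)/2}\sqrt{\lambda}$, which is the claimed bound. For $\lambda > 1$, I would instead estimate the complementary probability: for $0 < t < 1/2$,
\[
\p[\chi^2_{n-1} \ge (n-1)\lambda] \le e^{-t(n-1)\lambda}\,(1-2t)^{-(n-1)/2} = \Big(e^{-t\lambda}\,(1-2t)^{-1/2}\Big)^{n-1},
\]
and here the optimal $t = (\lambda-1)/(2\lambda)$ lies in $(0,1/2)$ exactly when $\lambda>1$, again producing the base $e^{(1-\lambda)/2}\sqrt{\lambda}$; taking the complement gives the stated lower bound on $\p[d(e,\bm{x}(0)) < c \mid |e|=n]$. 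Since $\p[\chi^2_{n-1} = (n-1)\lambda] = 0$, it does not matter whether one writes strict or non-strict inequalities.

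Finally, to conclude exponential decay I would note that the base $\rho(\lambda) := e^{(1-\lambda)/2}\sqrt{\lambda}$ satisfies $\ln \rho(\lambda) = \frac{1}{2}\bigl(1 - \lambda + \ln \lambda\bigr) < 0$ for every $\lambda > 0$ with $\lambda \ne 1$, by the elementary inequality $\ln\lambda < \lambda - 1$; hence $\rho(\lambda)^{n-1} \to 0$ geometrically. I do not anticipate a serious obstacle: the only points needing care are verifying that each Chernoff optimizer is feasible (which is precisely what pins the two cases to $\lambda<1$ and $\lambda>1$, respectively) and remembering the restricted domain $t < 1/2$ for the chi-squared moment generating function in the upper-tail estimate. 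If one preferred to avoid the chi-squared identity altogether, one could instead invoke a generic sub-gamma / Bernstein-type concentration inequality for sums of squared Gaussians, but working directly with the explicit chi-squared moment generating function gives the cleanest constants and matches the exponent $e^{\frac{1}{2}(1-\lambda)}\sqrt{\lambda}$ stated in the lemma.
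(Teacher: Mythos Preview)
Your proposal is correct and matches the paper's proof essentially step for step: both identify $(n-1)s_n^2/\sigma^2 \sim \chi^2_{n-1}$ via Cochran's theorem, apply the Chernoff bound with the explicit chi-squared moment generating function, and substitute the same optimizers $t = \tfrac{1}{2}(1 - 1/\lambda)$ and $t = \tfrac{1}{2}(1/\lambda - 1)$ in the two cases. Your added verification that the base $e^{(1-\lambda)/2}\sqrt{\lambda}$ is strictly below $1$ via $\ln\lambda < \lambda - 1$ is a nice touch that the paper leaves implicit.
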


\begin{proof}
The discordance of a hyperedge $e$ at time $0$ is the sample variance of the opinions $\{x_j(0) \mid j \in e\}$. Let $s_n^2$ denote the sample variance of $n$ opinions. We have that $\p[d(e, \bm{x}(0)) < c \mid |e| = n] = \p[s_n^2 < c]$.

\underline{Case 1}. Suppose that $\lambda > 1$. By Chernoff's bound \cite{degroot_stats},
\begin{equation}\label{lam_big}
    \p[s_n^2 \geq c] = \p\Big[ \frac{n-1}{\sigma^2} s_n^2 \geq \lambda (n-1)\Big] \leq \frac{\e[e^{t \frac{n-1}{\sigma^2} s_n^2}]}{e^{t\lambda(n-1)}} \, \,\text{ for all }\,\, t > 0\,.
\end{equation}
By Cochran's {theorem~\cite{knight}, $\frac{n-1}{\sigma^2}s_n^2 \sim \chi_{n-1}^2$.}
When $t < \frac{1}{2}$, we have that
\begin{align*}
    \e[e^{t \frac{n-1}{\sigma^2} s_n^2}] &= \frac{1}{2^{(n-1)/2}\Gamma(\frac{n-1}{2})}\int_0^{\infty} e^{tx} x^{\frac{n-1}{2}-1} e^{-x/2}dx \\
    &= \frac{1}{\Gamma(\frac{n-1}{2})} \int_0^{\infty} x^{\frac{n-1}{2}-1} e^{-x(1-2t)} dx \\
    &= \frac{1}{\Gamma(\frac{n-1}{2})} (1-2t)^{-\frac{n-1}{2}}\int_0^{\infty} x^{\frac{n-1}{2}-1} e^{-x} dx \\
    &= (1-2t)^{-\frac{n-1}{2}}.
\end{align*}
{Therefore,} when $0 < t< \frac{1}{2}$, \cref{lam_big} becomes
\begin{equation}\label{lam_big2}
    \p[s_n^2 \geq c] \leq \Big(\frac{1}{e^{t\lambda}\sqrt{1 - 2t} }\Big)^{n-1}\,.
\end{equation}
Setting $t = \frac{1}{2}(1 - \frac{1}{\lambda})$ in equation  \cref{lam_big2} yields
\begin{equation*}
    \p[s_n^2 \geq c] \leq \Big (e^{\frac{1}{2}(1 - \lambda)}\sqrt{\lambda} \Big)^{n-1}.
\end{equation*}

\vspace{5mm}

\underline{Case 2}. Suppose that $\lambda < 1$. By Chernoff's bound, 
\begin{equation}\label{lam_small}
    \p[s_n^2 < c ] = \p\Big[\frac{n-1}{\sigma^2} s_n^2 < \lambda(n-1)\Big] \leq \frac{\e[e^{-t \frac{n-1}{\sigma^2}s_n^2}]}{e^{-t\lambda(n-1)}} \, \,\text{ for all }\,\, t > 0\,.
\end{equation}
Similarly to Case 1, we compute that
\begin{equation*}
	\e[e^{-t \frac{n-1}{\sigma^2}s_n^2}] = \Big(\frac{1}{1 + 2t}\Big)^{\frac{n-1}{2}} \, \,\text{ for all }\,\, t > 0\,.
\end{equation*}	
Therefore, when $t > 0$, \cref{lam_small} becomes
\begin{equation}\label{lam_small2}
    \p[s_n^2 < c] \leq \Bigg( \frac{e^{\lambda t}}{\sqrt{1 + 2t}}\Bigg)^{n-1}\,.
\end{equation}
Setting $t = \frac{1}{2}(\frac{1}{\lambda} - 1)$ in \cref{lam_small2} yields
\begin{equation*}
	\p[s_n^2 < c] \leq \Big( e^{\frac{1}{2}(1 - \lambda)}\sqrt{\lambda}\Big)^{n-1}.
\end{equation*}
\end{proof}
\begin{remark}
When $\sigma^2 = c$ and the initial opinions are normally distributed, we have numerical evidence that $\p[d(e, \bm{x}(0)) < c \mid |e| = n ]$ converges to $\frac{1}{2}$ exponentially fast as $n \to \infty$, but we do not have a mathematical proof of the convergence rate.
\end{remark}

{We say that the opinion state of our hypergraph BCM} \textit{converges with threshold $\epsilon$ at time $T_{\epsilon}$} if{,} for all $ e \in E$, either $d(e, \bm{x}(T_{\epsilon})) \leq \epsilon$ or $d(e, \bm{x}(T_{\epsilon})) \geq c$. When $\epsilon = 0$, the time $T_{\epsilon}$ is exactly the convergence time.

\begin{theorem}\label{normal_time}
Let $H$ be the complete hypergraph with $N$ nodes. Let $\bm{x}(0)$ be the initial opinion state and suppose that $x_i(0) \sim \mathcal{N}(\mu, \sigma^2)$, where $\sigma^2 > c$ and $c \neq 0$. Finally, let $\epsilon < c$ and let $T_{\epsilon, N}$ be the convergence time with threshold $\epsilon$. It then follows that 
\begin{equation*}
	\e[T_{\epsilon, N}] = \Omega\Big(r \Big[\frac{2}{r+1}\Big]^N\Big)\,,
\end{equation*}	
where $r = e^{\frac{1}{2}(1 - c/\sigma^2)}\sqrt{\frac{c}{\sigma^2}}< 1$.
\end{theorem}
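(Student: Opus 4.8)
The plan is to show that, with probability approaching $1$ as $N\to\infty$, the opinion state stays frozen at $\bm{x}(0)$ for an exponentially long (in $N$) number of time steps before any update can occur, and that during this frozen period the state is not yet converged with threshold $\epsilon$. First I would note that by the update rule \cref{eq:BCM_hypergraph_update}, if every hyperedge selected so far has been discordant then $\bm{x}(t)=\bm{x}(0)$; hence, letting $\tau$ be the first time a concordant hyperedge is chosen, we have $\bm{x}(0)=\bm{x}(1)=\cdots=\bm{x}(\tau)$ and the first genuine update occurs only at step $\tau+1$. Conditional on $\bm{x}(0)$, the successive hyperedge choices are i.i.d.\ uniform on $E$ and independent of $\bm{x}(0)$, and each is concordant with respect to the current state (which is still $\bm{x}(0)$) with probability $q:=q(\bm{x}(0)):=|\{e\in E : d(e,\bm{x}(0))<c\}|/|E|$. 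Therefore $\tau$ is geometric given $\bm{x}(0)$, with $\e[\tau\mid\bm{x}(0)]=(1-q)/q$ whenever $q>0$. (This also explains the speed-up used in \cref{complete_normal}: forcing the time-$0$ selection to be concordant skips exactly this exponentially long wait.)

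The crucial estimate is an upper bound on $\e[q]$. Since $|E|=2^N-N-1$ and, for a hyperedge $e$ of size $n$, $\p[d(e,\bm{x}(0))<c]=a_n$ with $a_n\le r^{n-1}$ by \cref{normal_edge_rate} applied in the case $\lambda=c/\sigma^2<1$, I would compute
\[
  \e[q]=\frac{1}{|E|}\sum_{n=2}^{N}\binom{N}{n}a_n\le\frac{1}{r|E|}\sum_{n=0}^{N}\binom{N}{n}r^n=\frac{(1+r)^N}{r(2^N-N-1)}=:\bar{q}_N\,,
\]
so that $1/\bar{q}_N=r\bigl(2/(r+1)\bigr)^N\bigl(1-(N+1)2^{-N}\bigr)\gtrsim r\bigl(2/(r+1)\bigr)^N$ for large $N$.

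Next I would control the event $A:=\{\bm{x}(0)\text{ is not converged with threshold }\epsilon\}=\{\exists\,e\in E:\epsilon<d(e,\bm{x}(0))<c\}$. On $A$ such a hyperedge is itself concordant, so $q>0$, and by the freezing observation $T_{\epsilon,N}\ge\tau+1$. To see $\p[A]\to1$, it suffices to exhibit one pair $\{i,j\}$ with $d(\{i,j\},\bm{x}(0))=\tfrac12(x_i(0)-x_j(0))^2\in(\epsilon,c)$; for normal initial opinions $x_i(0)-x_j(0)\sim\mathcal{N}(0,2\sigma^2)$ has positive density on the corresponding interval, so each fixed pair succeeds with probability $\rho>0$, and taking $\lfloor N/2\rfloor$ disjoint (hence independent) pairs gives $\p[A^c]\le(1-\rho)^{\lfloor N/2\rfloor}\to0$. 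Finally I would combine everything:
\[
  \e[T_{\epsilon,N}]\ge\e[\tau\mathbf{1}_A]=\e\!\left[\mathbf{1}_A\frac{1-q}{q}\right]\ge\e\!\left[\frac{\mathbf{1}_A}{q}\right]-1\ge\frac{\p[A]^2}{\e[q]}-1\ge\frac{\p[A]^2}{\bar{q}_N}-1\,,
\]
where the middle inequality $\e[\mathbf{1}_A/q]\ge\p[A]^2/\e[q]$ comes from Jensen's inequality applied to the convex map $t\mapsto1/t$ under the conditional law $\p[\,\cdot\mid A]$ (equivalently, Cauchy--Schwarz: $\p[A]=\e[\sqrt{q}\,\mathbf{1}_A\cdot\mathbf{1}_A/\sqrt{q}]\le\e[q]^{1/2}\e[\mathbf{1}_A/q]^{1/2}$). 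Since $\p[A]^2\to1$ and $1/\bar{q}_N\gtrsim r(2/(r+1))^N$, this gives $\e[T_{\epsilon,N}]=\Omega\bigl(r\,[2/(r+1)]^N\bigr)$.

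The step I expect to be the main obstacle is precisely passing from the bound on $\e[q]$ to a bound on $\e[\tau]=\e[(1-q)/q]$ without degrading the exponential rate: $q=q(\bm{x}(0))$ is itself random, and a crude approach (e.g.\ Markov's inequality forcing $q\lesssim\sqrt{\bar{q}_N}$ with high probability) yields only base $\sqrt{2/(r+1)}$ rather than the sharp base $2/(r+1)$ in the statement. The convexity/Cauchy--Schwarz step is what preserves the correct exponent, and it is essential to handle $q$ as a random variable rather than bounding it pointwise. A minor secondary point is verifying $\p[A]\to1$, i.e.\ ruling out the exponentially unlikely event that $\bm{x}(0)$ is already converged with threshold $\epsilon$, which the disjoint-pairs argument settles.
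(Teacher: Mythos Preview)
Your proof is correct and follows essentially the same strategy as the paper: bound the expected fraction $\e[q]$ of concordant hyperedges via \cref{normal_edge_rate}, show that the initial state is almost surely not yet $\epsilon$-converged, and deduce that the geometric waiting time for the first concordant selection gives the exponential lower bound. In fact, your Cauchy--Schwarz/Jensen step $\e[\mathbf{1}_A/q]\ge\p[A]^2/\e[q]$ is a cleaner justification of the passage from $\e[q]$ to $\e[1/q]$ than the paper's, which simply writes $\e[s_N]=1/\e[X_N]$ as an equality where an inequality (in the needed direction) is what actually holds.
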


\begin{proof}
Let $A_N$ be the Bernoulli random variable that equals $1$ if there is a hyperedge $e \in E$ such that $\epsilon < d(e, \bm{x}(0)) < c$ and equals $0$ if there is no such hyperedge. We have that
\begin{equation*}
    \e[T_{\epsilon, N}] = \e[T_{\epsilon, N} \mid A_N = 1]\p[A_N = 1] + \e[T_{\epsilon, N} \mid A_N = 0]\p[A_N = 0]\,.
\end{equation*}
If $A_N = 0$, then $T_{\epsilon, N} = 0$. As $N \to \infty$, we have
\begin{equation*}
    \lim_{N \to \infty} \p[A_N = 0] = \lim_{N \to \infty}\prod_{n =2}^N (1 - \p[\epsilon < d(e, \bm{x}(0)) < c \mid |e| = n])^{N \choose n} = 0\,,
\end{equation*}
where $e$ denotes a hyperedge that we choose uniformly at random. Let $s_N$ be the first time that we select a concordant hyperedge. If $A_N = 1$, then $T_{\epsilon, N} \geq s_N$. 
Let $X_N$ be the fraction of hyperedges in $H$ that are concordant at time $0$. We calculate
\begin{align*}
    \e[T_{\epsilon, N}] &\geq \e[T_{\epsilon, N} \mid A_N = 1]\p[A_N = 1] \\
    &\geq \e[s_N]\p[A_N = 1] \\
    &= \frac{\p[A_N = 1]}{\e[X_N]} \,, \\
    \e[X_N] &= \frac{\sum_{n=2}^N {N \choose n} \p[d(e, \bm{x}(0)) < c \mid |e| = n]}{2^N - N - 1} 
    \\
    &\leq \frac{\sum_{n=2}^N {N \choose n}r^{n-1}}{2^N - N - 1} \qquad{(\text{by \cref{normal_edge_rate}})} 
    \\
    &= \frac{\frac{1}{r}( (r+1)^N - N r -1)}{2^N - N - 1} \,.
\end{align*}
Therefore, as $N \to \infty$, we obtain
\begin{equation*}
    \e[T_{\epsilon, N}] \geq r \p[A_N = 1] \frac{2^N - N-1}{(r+1)^N - Nr -1}\sim r\Big(\frac{2}{r+1}\Big)^N\,.
\end{equation*}
\end{proof}
\begin{remark}
{\cref{normal_time}} applies to the convergence time $T_{0, N}$, which is almost surely finite by \cref{complete_finite_time}.
\end{remark}


\section{Opinion Jumping}\label{sec:jump}

We now study ``opinion jumping'', a phenomenon that occurs in our hypergraph BCM that cannot occur in standard dyadic BCMs. An \textit{opinion jump} occurs at time $t$ if there is a node $i$ such that $|x_i(t+1) - x_i(t)| > c$. The number of opinion jumps that occur at time $t$ is the number of nodes $i$ that satisfy $|x_i(t+1) - x_i(t)| > c$.

An opinion jump can occur only if the size of the selected hyperedge is at least $3$. Therefore, this behavior requires polyadic interactions; it cannot occur on BCMs on ordinary graphs. Moreover, we believe that it is one of the driving behaviors that causes our hypergraph BCM to converge to consensus so much more easily than is the case for standard dyadic BCMs. For examples of opinion jumping, see \Cref{complete_normal} and \cref{enron}. In this section, we quantify how common it is for an opinion jump to occur.

\begin{lemma}\label{J_eq}
Let $J_t$ be the number of opinion jumps that occur at time $t$. Suppose that the distribution of opinions at time $t$ has a mean of $\mu$ and a variance of $\sigma^2 < \infty$. Let $p_n = \p[|x_i - \bar{x}_e| > c \mid i \in e\,,\, |e| = n\,,\, d(e, \bm{x}(t)) < c]$ be the probability that a node's opinion is farther than $c$ from the mean opinion of the nodes in a {concordant} size-$n$ hyperedge that is incident to the node, and let $p= \p[|x- \mu| > c]$. Let $a_n = \p[d(e, \bm{x}(t)) < c \mid |e| = n]$ be the probability that a size-$n$ hyperedge is concordant, and let $a = \lim_{n \to \infty} a_n$ be the limiting probability of concordance. Finally, let $e_t$ be the hyperedge that we select at time $t$. {The expected number of opinion jumps is}
\begin{align*}
    \e[J_t] = \Bigg(pa \e[|e_t|] &+ p \sum_{n=2}^N (a_n - a) \p[|e_t| = n]n + \sum_{n=2}^{N} (p_n - p)a \p[|e_t| = n] n \\
    &+ \sum_{n = 2}^N (p_n - p)(a_n - a) \p[|e_t| = n]n\Bigg)\,.
\end{align*}
\end{lemma}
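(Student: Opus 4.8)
The plan is to decompose $J_t$ into a sum of per-node indicator variables and then apply linearity of expectation together with the exchangeability of the opinions. First I would note that a node $i$ can experience an opinion jump at time $t$ only if it is incident to the selected hyperedge $e_t$ and $e_t$ is concordant, in which case $i$ jumps precisely when $|x_i(t) - \bar{x}_{e_t}(t)| > c$. Thus
\begin{equation*}
    J_t = \sum_{i=1}^N \mathbf{1}[i \in e_t]\,\mathbf{1}[d(e_t, \bm{x}(t)) < c]\,\mathbf{1}\big[\,|x_i(t) - \bar{x}_{e_t}(t)| > c\,\big]\,.
\end{equation*}

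Next I would take the expectation and condition on the size $n = |e_t|$ of the selected hyperedge and on the event $\{i \in e_t\}$. Since $e_t$ is, conditionally on its size, uniform among the size-$n$ hyperedges and the opinions at time $t$ are exchangeable, the conditional probability
\begin{equation*}
    \p\big[\,d(e_t, \bm{x}(t)) < c \ \text{and}\ |x_i(t) - \bar{x}_{e_t}(t)| > c \;\big|\; i \in e_t,\ |e_t| = n\,\big]
\end{equation*}
does not depend on $i$, and by the definitions of $a_n$ and $p_n$ it factors as $a_n p_n$ (the concordance event depends only on $e_t$ and $\bm{x}(t)$, so conditioning on $i \in e_t$ does not change its probability). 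Using $\sum_{i=1}^N \p[i \in e_t \mid |e_t| = n] = \e[\,|e_t| \mid |e_t| = n\,] = n$, I obtain
\begin{equation*}
    \e[J_t] = \sum_{n=2}^N \p[|e_t| = n]\, n\, a_n p_n\,.
\end{equation*}

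Finally I would insert the identity $a_n p_n = ap + a(p_n - p) + p(a_n - a) + (a_n - a)(p_n - p)$, expand the sum into its four pieces, and recognize $\sum_{n=2}^N \p[|e_t| = n]\, n = \e[\,|e_t|\,]$ in the leading term; this yields exactly the stated formula. The computation is mostly bookkeeping, and I expect the only delicate point to be the symmetry argument in the middle step: it is what allows the sum over nodes incident to $e_t$ to be replaced by $n$ times a single-node probability, and it is also what lets the concordance factor $a_n$ be separated cleanly from the conditional jump probability $p_n$. The subsequent decomposition into $p$, $a$, and their deviations is merely a device to display $\e[J_t]$ as the dominant term $pa\,\e[|e_t|]$ plus corrections that vanish as $a_n \to a$ and $p_n \to p$.
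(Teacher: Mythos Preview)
Your proposal is correct and follows essentially the same route as the paper: both arguments condition on the size $n$ of the selected hyperedge, use the per-node jump probability $p_n$ together with the concordance probability $a_n$ to arrive at the intermediate identity $\e[J_t] = \sum_{n} n\, a_n p_n\, \p[|e_t| = n]$, and then expand $a_n p_n$ around $ap$. The only cosmetic difference is that the paper first writes down the conditional distribution of $J_t$ (as binomial in $n$ with parameter $p_n$) and then takes its mean, whereas your indicator-variable computation extracts the mean directly via linearity; your version is arguably cleaner since it avoids any implicit independence claim among the per-node jump indicators.
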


\begin{remark}
The quantities $p_n$, $p$, $a_n$, and $a$ depend on the distribution of opinions at time $t$. The value of $a$ is given by \cref{eq:limprobconc}.
\end{remark}

\begin{proof}
For $j \geq 1$, we have
\begin{align*}
    \p[J_t = j] &= \sum_{n=2}^N \p[J_t = j \,\text{ and }\, |e_t| = n] \\
    &= \sum_n \p[J_t = j \mid |e_t| = n]\p[|e_t| = n] \\
    &= \sum_n \p[J_t = j \,\text{ and }\, d(e_t, \bm{x}(t)) < c \mid |e_t| = n]\p[|e_t| = n] \qquad{(\text{because } j \geq 1)}\\
    &= \sum_n \p[J_t = j \mid d(e_t, \bm{x}(t)) < c, |e_t| = n]a_n \p[|e_t| = n] \\
    &= \sum_n {n \choose j} p_n^j (1-p_n)^j a_n \p[|e_t| = n]\,.
    \end{align*}
    Therefore,
    \begin{align*}
    \mathbb{E}[J_t] &= \sum_{j=0}^N \p[J_t = j]j \notag \\
    &=\sum_j\sum_n j{n \choose j}p_n^j (1-p_n)^j a_n \p[|e_t| = n] \notag \\
    &= \sum_n a_n\p[|e_t| = n] \sum_j j {n \choose j} p_n^j (1-p_n)^j \notag \\
    &= \sum_n a_n \p[|e_t| = n] p_nn \label{bernoulli} \\
    &= \Bigg( pa \e[|e_t|] + p \sum_{n=2}^N (a_n - a) \p[|e_t| = n]n + \sum_{n=2}^{N} (p_n - p)a \p[|e_t| = n] n \notag \\
    &\qquad + \sum_{n = 2}^N (p_n - p)(a_n - a) \p[|e_t| = n]n \Bigg) \notag \,.
\end{align*}
\end{proof}

{We} use \cref{J_eq} to derive the asymptotic behavior of $\e[J_0]$. The following proposition says that, under certain conditions, $\e[J_0]$ grows linearly with the mean hyperedge size of the hypergraph on which our BCM occurs.

\begin{proposition}\label{J_asym}
{Let $\{H_m\}$ be a sequence of hypergraphs, with the associated sequence $\{V_m\}$ of nodes and sequence $\{E_m\}$ of hyperedge sets. Let $\{g_m\}$, where
\begin{equation*}
    g_m(n) = \frac{|\{e \in E_m \mid |e| = n\}|}{|E_m|}\,,
\end{equation*}
be the corresponding sequence of hyperedge-size distributions.} Suppose that we draw the initial opinions from the same distribution for all $H_m$, and let {$p$, $p_n$, $a$, and $a_n$} be defined as in \cref{J_eq}. Finally, let $J_0^m$ be the number of opinion jumps that occur at time $0$ for $H_m$. {Suppose that} $(a - a_n)n \to 0$ as $n \to \infty$, $g_m(n) \to 0$ for all $ n$ as $m \to \infty${,} and $a(p_n-p)n \to 0$ as $n \to \infty${. It then follows that} $\e[J_0^m] \sim pa \e[|e_0|]$ as $m \to \infty$.
\end{proposition}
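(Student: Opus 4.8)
The plan is to begin from the exact expression for $\e[J_0^m]$ provided by \cref{J_eq}. Since we select the hyperedge at time $0$ uniformly at random from $E_m$, we have $\p[|e_0| = n] = g_m(n)$, and, because every hyperedge has size at least $2$, $\e[|e_0|] = \sum_{n} n\, g_m(n) \geq 2$. Substituting $g_m(n)$ for the hyperedge-size probabilities in \cref{J_eq} gives
\begin{equation*}
    \e[J_0^m] = pa\,\e[|e_0|] + E_1^m + E_2^m + E_3^m\,,
\end{equation*}
where $E_1^m := p \sum_n (a_n - a) g_m(n)\, n$, $E_2^m := a \sum_n (p_n - p) g_m(n)\, n$, and $E_3^m := \sum_n (p_n - p)(a_n - a) g_m(n)\, n$. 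Since $pa\,\e[|e_0|] \geq 2pa$, it suffices to show that $E_1^m, E_2^m, E_3^m \to 0$ as $m \to \infty$.

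The key to controlling these error terms is a truncation argument. Because $g_m(n) \to 0$ as $m \to \infty$ for each fixed $n$, for every fixed cutoff $N_0$ the ``head mass'' $\sum_{n=2}^{N_0} g_m(n)$ tends to $0$ as $m \to \infty$; meanwhile $\sum_{n > N_0} g_m(n) \leq 1$ for all $m$. To bound $E_1^m$, I would fix $\varepsilon > 0$ and use the hypothesis $(a - a_n)n \to 0$ to choose $N_0$ so that $|(a - a_n)n| < \varepsilon$ for all $n > N_0$. Splitting the sum at $N_0$: the tail contributes at most $\varepsilon \sum_{n > N_0} g_m(n) \leq \varepsilon$, while the head contributes at most $N_0 \sum_{n=2}^{N_0} g_m(n)$ (using $|a_n - a| \leq 1$), which tends to $0$ as $m \to \infty$. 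Hence $\limsup_{m \to \infty} |E_1^m| \leq p\,\varepsilon$, and since $\varepsilon > 0$ is arbitrary, $E_1^m \to 0$. The term $E_2^m$ is handled by the identical argument, with the hypothesis $a(p_n - p)n \to 0$ playing the role of $(a - a_n)n \to 0$. For $E_3^m$, I would first note that $|p_n - p| \leq 1$ gives $|(p_n - p)(a_n - a)n| \leq |(a_n - a)n|$, which reduces the estimate to the argument already used for $E_1^m$.

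Combining these facts yields $\e[J_0^m] = pa\,\e[|e_0|] + o(1)$, and dividing by $pa\,\e[|e_0|] \geq 2pa > 0$ gives $\e[J_0^m] / (pa\,\e[|e_0|]) \to 1$, which is the claimed asymptotic equivalence. The only genuine difficulty is the interchange of the limits $m \to \infty$ and $n \to \infty$ inside the sums over hyperedge sizes; the truncation argument is exactly what legitimizes this, and the two decay hypotheses $(a - a_n)n \to 0$ and $a(p_n - p)n \to 0$ are precisely what is needed to make the tails small uniformly in $m$. (The argument implicitly uses $pa \neq 0$; if $pa = 0$, the same computation shows $\e[J_0^m] \to 0$, and the asymptotic statement should be read with that caveat.)
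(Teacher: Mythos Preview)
Your proposal is correct and follows essentially the same approach as the paper: both start from the decomposition in \cref{J_eq} and show that each of the three error sums $\sum_n x_n\, g_m(n)$ (with $x_n$ equal to $(a_n-a)n$, $a(p_n-p)n$, and $(p_n-p)(a_n-a)n$) tends to $0$. The paper phrases this step informally---``the weights concentrate at larger values of $n$, so $\sum_n x_n g_m(n)\to 0$''---whereas your $\varepsilon$--$N_0$ truncation makes the same argument precise; your observation about the implicit $pa\neq 0$ assumption is also a fair caveat that the paper leaves unmentioned.
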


\begin{remark}
{When} $H_m$ is the complete hypergraph with $m$ nodes, we have that $g_m(n) \to 0$ for all $n$ as $m \to \infty$. The values of $a$, $a_n$, $p$, and $p_n$ depend only on the opinion distribution at time $t$. The value of $a$ is given by \cref{eq:limprobconc}. If the initial opinions are normally distributed, then \cref{normal_edge_rate} implies that $(a_n -a)n \to 0$ when $\sigma^2 \neq c$. The exact value of $p$ depends on the initial distribution, but it tends to increase with $\sigma^2$. Our numerical computations suggest that ${a}(p_n - p)n \to 0$ when the initial opinion distribution is normally distributed with variance $\sigma^2 \neq c$.
\end{remark}

\begin{proof}
\cref{J_eq} implies that
\begin{align*}
     \e[J_0^m] &= pa\e[|e_0|] + a \sum_n (p_n-p)n g_m(n) + p \sum_n (a_n -a)ng_m(n) \\
    &\qquad + \sum_n (p-p_n)(a_n-a)n g_m(n)\,.
\end{align*}
Let $x_n$ be any sequence such that $x_n \to 0$. For any $m$, the quantity $\sum_n x_n g_m(n)$ is a weighted average of $\{x_n\}$. As $m \to \infty$, the weights concentrate at larger values of $n$. Therefore, because $x_n \to 0$, it follows that $\sum_n x_n g_m(n) \to 0$ as $m \to \infty$.

We apply the above argument to $x_n = {a}(p_n -p)n$, $x_n = (a_n- a)n$, and $x_n = (p-p_n)(a_n-a)n$ to prove the proposition.
\end{proof}

In Figure~\ref{jumps}, we present numerical results that support the claim that $\e[J_0] \approx pa \e[|e_0|]$ when the {hyperedge-size} distribution concentrates at large {hyperedge} sizes. We generate hypergraphs from the $G(N, \bm{m})$ hypergraph model for different values of $\bm{m}$. For each hypergraph, we run {$500$} trials of our hypergraph BCM and record the mean value of $J_0$ for the hypergraph. We plot the mean value of $J_0$ versus the mean hyperedge size $\e[|e_0|]$ in the hypergraph. We show results for initial opinions that are normally distributed with standard deviations of $\sigma = 0.6$, $\sigma = 0.8$, $\sigma =1$, and $\sigma = 1.2$. We use a confidence bound of $c = 1$ in all trials. The claim $\e[J_0] \approx pa \e[|e_0|]$ implies that for each $\sigma$, there should be a linear relationship with a slope of $pa$, where $p$ and $a$ depend on $\sigma$. Whenever $\sigma^2 < c$, the limiting probability of concordance is $a = 1$. The slope when {$\sigma = 0.8$} is steeper than the slope when {$\sigma = 0.6$} because $p$ becomes larger for progressively larger {values of $\sigma$}. When $\sigma = 1$, the value of $p$ is larger than for {$\sigma =  0.6$ and $\sigma = 0.8$.} {However,} the limiting probability of concordance is only $a = \frac{1}{2}$ {and} the slope $pa$ {for the case $\sigma = 1$} is slightly less steep than {the slope} when {$\sigma = 0.8$}. We observe that the linear relationship between $\e[J_0]$ and $\e[|e_0|]$ is not as strong when $\sigma^2 = c = 1$ as when $\sigma^2 \neq c$. Based on numerical evidence, we suspect that this is because $(p_n -p)n \not\to 0$ when $\sigma^2 = c$. Finally, when $\sigma = 1.2$, we {observe} that $\e[J_0] \approx 0$ because the limiting probability of concordance is $a = 0$ whenever $\sigma^2 > c$.

\begin{figure}
    \centering
    \includegraphics[width = .7\textwidth]{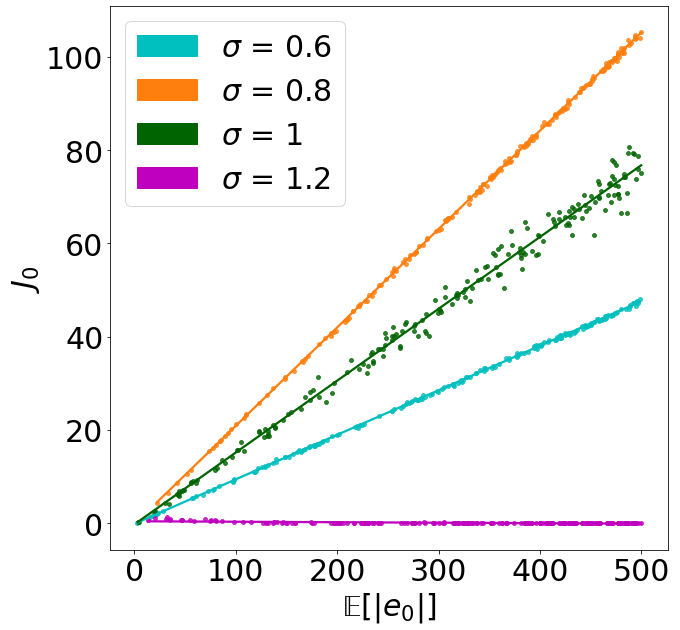}
    \caption{Empirical evidence for the linear relationship between $\e[J_0]$ and $\e[|e_0|]$, where $J_0$ is the number of opinion jumps that occur at time $0$ and the hyperedge $e_0$ is the hyperedge that we choose uniformly at random at time $0$. In {this} numerical experiment, the initial opinions are normally distributed with a mean of $0$ and standard deviation of $\sigma$, which takes values of $0.6$, $0.8$, $1$, and $1.2$. For each $\sigma$, we generate $200$ hypergraphs with $N = 1000$ nodes. To construct the $\ell$th hypergraph, we choose $x_{\ell} \in [0,1]$ uniformly at random and set $m^{(\ell)}_n = {N \choose i}x_{\ell}^n$. We generate the $\ell$th hypergraph from the $G(N, \bm{m}^{(\ell)})$ model, which has $N$ nodes and $m_n^{(\ell)}$ hyperedges of size $n$ that we choose uniformly at random. For each hypergraph, we run one step of our hypergraph BCM with confidence bound $c = 1$ and record the number $J_0$ of opinion jumps. We run $500$ trials {for} each hypergraph; {for each trial,} we preserve the hypergraph structure, reset the initial opinions, and run one step of our hypergraph BCM. We record the mean value of $J_0$ over these $500$ trials{,} and we plot it versus the mean hyperedge size $\e[|e_0|]$ in the hypergraph. Each data point corresponds to the trial results for a single hypergraph for a given value of $\sigma$. For each $\sigma$, we plot the line {of best fit.}
    }
    \label{jumps}
\end{figure}


\section{Conclusions and Discussion}\label{conclusion}

We formulated a bounded-confidence model (BCM) {on} hypergraphs and explored its properties {using} both mathematical analysis and Monte Carlo simulations. We showed that polyadic (i.e., ``higher-order'') interactions play an important role in opinion dynamics and that one cannot reduce such interactions to pairwise interactions on a graph. In our hypergraph BCM, we also demonstrated a novel phenomenon, which we called ``opinion jumping'', that requires polyadic interactions {to manifest.} Therefore, opinion jumping cannot occur in standard dyadic BCMs.

We proved that our hypergraph BCM converges to consensus on {complete hypergraphs} for a wide variety of initial conditions. This is very different from what occurs in standard dyadic BCMs, which usually converge to multiple opinion clusters. We also studied the effects of a variety of initial opinion distributions on the dynamics of our BCM. In particular, we examined the convergence properties of our BCM when the initial opinion distribution is bounded (but not necessarily uniform), normally-distributed, or has a variance $\sigma^2$ that is less than the confidence bound $c$. Based on our results, we expect that the limit states of dyadic BCMs also depend on the initial opinion distribution (although{, to the best of our knowledge,} this is not something that has been studied in detail in prior research) and that the number of opinion clusters depends not only on the confidence bound $c$ but also on the relative sizes of $c$ and the variance $\sigma^2$ of the initial opinion distribution.

We also explored the dependence of the limit state of our hypergraph BCM on community structure. We proved that the opinion state can become polarized if the intra-community hyperedges are sufficiently small in size. This leads to the formation of echo chambers. {We also showed} that if the intra-community hyperedges are unbounded in size and if the communities are {sufficiently large} and form hypercliques, then the opinion state converges to consensus.

We {demonstrated} that there is a phase transition in convergence time {of our BCM} on {complete hypergraphs} when the confidence bound $c$ equals the variance $\sigma^2$ of the initial opinion distribution. When $\sigma^2 > c$, the convergence time of our BCM on {a} complete hypergraph depends {at least} exponentially on the number $N$ of nodes. This has implications for the feasibility of using Monte Carlo simulations for simulating our BCM on {a} complete hypergraph when $\sigma^2 > c$ and $N$ is large (and, more generally, on any hypergraph {in which} large hyperedges {constitute} a significant proportion of the hyperedge set, because large hyperedges are likely to be discordant and it thus takes many time steps to choose a concordant hyperedge). It is fascinating that there is a phase transition in convergence {time but} not in the limit state. By contrast, in the standard dyadic DW model, there is a phase transition in convergence time at the same {confidence-bound} threshold $c^*$ at which there is a phase transition in the limit state \cite{meng}. We also proved that our hypergraph BCM converges in finite time on {complete hypergraphs}. This is similar to what occurs in the standard dyadic HK model, which converges in finite time on complete graphs; however, it differs from the DW model, which tends not to converge in finite time on complete graphs.

Because of opinion jumping, which requires polyadic interactions, nodes with extreme opinions can move quickly {towards the mean opinion in our hypergraph BCM}. {When the variance $\sigma^2$ satisfies} $\sigma^2 < c$, we showed that the number of opinion jumps in the first time step grows roughly linearly with the mean hyperedge size in a hypergraph and that it becomes larger for progressively larger values of $\sigma^2$ up to the value $c$. {It will be worthwhile to determine} the precise necessary conditions for opinion jumping in hypergraph BCMs.

In our work, we made several modeling choices, and there are numerous alternatives that are also worth studying. For example, one can formulate a hypergraph BCM that uses synchronous updates of opinions instead of asynchronous updates. For example, at each discrete time, suppose that each node updates its opinion to the mean of the {incident} hyperedges' mean opinions.{\footnote{The model of Sahasrabuddhe et al.~\cite{sahas} is related to this idea. In their model, the amount of influence of a hyperedge depends on the opinion state of the system.}} We believe that such a synchronous model {has} similar limit states as our asynchronous model, but we expect such models to converge much more quickly to a limit state. One can also develop synchronous models in which each node updates its opinion to a weighted mean of the {incident} hyperedges' mean opinions. Such heterogeneity models a situation in which some friendship groups {exert} more influence on a person than others. This extends the notion of trust from the dyadic DeGroot model \cite{degroot}. Another of our {modeling} choices was our {discordance function. Instead of choosing} $d = d_1$ for the discordance function, it is worthwhile to study the entire family of discordance functions $d_{\alpha}$ for $\alpha \in [0,1]$ that we defined in \cref{eq:BCM_discordance}. The case $\alpha = 0$ is particularly interesting because it models a scenario in which it is more difficult for large groups of people to agree than it is for small groups. Another variation of our model involves incorporating heterogeneous confidence bounds, which models {situations} in which some individuals are {persuaded more easily} than others.

There are a variety of other avenues to explore. For example, it is worth conducting a deeper investigation of the role of hypergraph topology on the limit states of hypergraph BCMs, and one can also study BCMs on simplicial complexes (which {entail} various constraints on {which polyadic interactions are permissible}). We believe that the presence of large hyperedges that connect some subset of a hypergraph's nodes will facilitate the convergence of those nodes to {consensus}. One can also develop adaptive (i.e., coevolving) hypergraph BCMs, such as by modifying the hypergraph structure at each {time step} in response to the current opinion state. For example, one can allow agents to strategically rewire in a way that maximizes their influence or perhaps to simply leave a hyperedge when the other nodes that are incident to it become too {``annoying''} ({which can occur sometimes} in discussion groups on social media). From a control-theoretic perspective, one can {examine} how much control the {nodes in a hypergraph} (or an outside controller) can have in steering an opinion state towards a particular limit state by choosing which hyperedges to update or rewire.

\appendix

\section{Continuum Formalism}

Instead of running Monte Carlo simulations, which are costly, one can study the ``continuum'' formalism of Ref.~\cite{ben2003bifurcations} using numerical integration. Consider a hypergraph in which every hyperedge is of size $\ell \in L {\subseteq} \{2,\ldots,n\}$, and let $P(x,t)\,dx$ be the probability density function that indicates how many nodes have opinions in the interval $[x,x+dx]$ at time $t$. The distribution $P(x,t)$ evolves according to the rate equation
{{\tiny
\begin{equation}\label{eq:BCM_hyper_rate_eq}
	\frac{\partial}{\partial t} P(x,t) = \sum_{\ell \in L} \int_{\{\sum_{j=1}^\ell (y_j - \bar{y})^2 < c(\ell-1)\} }  dy_1 \cdots dy_\ell P(y_1,t)\times \cdots \times P(y_\ell,t) \left[\delta(x-\bar{y} )- \delta (x-y_1)\right]\,.
\end{equation}}
}
The $\ell$-fold integrals in the summand are over all $\ell$-tuples of points whose sample variance is less than $c$. The delta functions reflect the gains (from nodes that update their opinion to $\bar{y}$) and losses (from nodes that update their opinions and thus change their current opinion) in the update process. We do not study \cref{eq:BCM_hyper_rate_eq} in the present paper, but it seems interesting to {examine} in future work.


\section*{Acknowledgments}

We thank Phil Chodrow and Ryan Wilkinson for helpful discussions and comments.




\end{document}